\theoremstyle{plain}
\newtheorem{theorem}{Theorem}
\newtheorem{definition}[theorem]{Definition}
\newtheorem{lemma}[theorem]{Lemma}
\newtheorem{proposition}[theorem]{Proposition}
\newcommand{\tr}{\textup{Tr}}
\newlength{\larg} 
\begin{document}

\thispagestyle{empty}

\setcounter{page}{1}

\title{A flexible matrix Libor model with smiles}
\author{Jos\'{e} Da Fonseca\footnote{Auckland University of Technology, Department of Finance, Private Bag 92006, 1142 Auckland, New Zealand. email: jose.dafonseca@aut.ac.nz.}
  \and
\textrm{Alessandro Gnoatto}\thanks{Universit\`{a} degli Studi di Padova, Dipartimento di Matematica Pura ed Applicata, Via Trieste 63, Padova, Italy and Mathematisches Institut der LMU, Theresienstrasse 39, Munich  Germany. E-mail: gnoatto@mathematik.uni-muenchen.de}
\and
\textrm{Martino Grasselli}\thanks{Universit\`{a} degli Studi di Padova, Dipartimento di Matematica Pura ed Applicata, Via Trieste 63, Padova, Italy. E-mail: grassell@math.unipd.it and Ecole Sup\'{e}rieure d'Ing\'{e}nieurs L\'{e}onard de Vinci, D\'{e}partement Math\'{e}matiques et Ing\'{e}nierie Financi\`{e}re, 92916 Paris La D\'{e}fense, France.}}

\date{ \today}

\maketitle

\begin{abstract}
We present a flexible approach for the valuation of interest rate derivatives based on Affine Processes. We extend the methodology proposed in \cite{article_KRPT2009} by changing the choice of the state space. We provide semi-closed-form solutions for the pricing of caps and floors. We then show that it is possible to price swaptions in a multifactor setting with a good degree of analytical tractability. This is done via the Edgeworth expansion approach developed in \cite{article_CollinDufresneGoldstein2002}. A numerical exercise illustrates the flexibility of Wishart Libor model in describing the movements of the implied volatility surface.
\end{abstract}

\vspace{7cm}
\textbf{Keywords:} Affine processes, Wishart process, Libor market model, Fast Fourier Transform, Caps, Floors, Swaptions. \\

\textbf{JEL codes:} G13, C51.\\
\newpage

\section{Introduction}

In this paper we present a unified framework for the valuation of caps, floors and swaptions. These instruments are the most common derivative securities which are traded in a fixed income desk of a financial institution (see e.g. \cite{brigo06}). Practitioners usually price these products by relying on a Black-Scholes like formula, which was first presented in \cite{article_Black1976}. The market convention of pricing caps and swaptions using the Black formula is based on an application of the \cite{bla73} formula for stock options by assuming that the underlying interest rates are lognormally distributed. Remarkably, the use of this kind of formulae had no theoretical justification, since they involved a procedure in which the discount factor and the Libor rates were assumed to be independent so as to write the pricing formula as a product of a bond price and the expected payoff. The systematic use of this market practice ignited the interest of academics aiming at providing a coherent theoretical background.\\

In a series of articles, \cite{article_MiltersenSandmannSondermann1997}, \cite{article_BraceGatarekMusiela1997}, \cite{article_Jamshidian1997} and \cite{article_MusielaRutkowsky1997}, provided these theoretical foundations, introducing the Libor and Swap Market Model. Following these papers a stream of contributions appeared, trying to extend the basic model to the case where the volatility of the underlying factor is stochastic. The most famous proposals on this side can be found e.g. in \cite{article_AndersenBrotherton-Ratcliffe2001}, \cite{article_WuZhang2006},  \cite{article_JoshiRebonato2003}, \cite{andand2002}, \cite{pit05}, \cite{pit05b}. Other approaches explored different dynamics for the driving process with respect to the CEV or displaced-diffusion considered before for the Libor rate: for example \cite{glass03}, \cite{article_EberleinOzkan2005} introduced jump and more general Lévy processes, allowing for discontinuous sample paths of the driving process. Another interesting approach is the one of \cite{brigo03} based on a mixture of lognormals.\\

A typical problem in the previous approaches is that once the closed form solution for cap prices is found, to obtain an analogous result for swaptions it is customary to assume that the underlying (which is a coupon bond) behaves like a scalar process (typically again geometric Brownian motion). This results in inconsistencies between the so-called \textit{Libor} and \textit{Swap Market Models}. Even more important, by assuming that the coupon bond is driven by a scalar process, we do not take into account the correlation effects among the different coupons, a key feature of a swaption which may be viewed also as a correlation product. This last remark is of paramount importance for practitioners (see e.g. the introduction of \cite{article_CollinDufresneGoldstein2002}).\\

In this paper we consider a new approach based on the stochastic discount factor methodology, where instead of modeling directly the Libor rate, one concentrates on quotients of traded assets (i.e. bonds).  It has been first introduced by \cite{article_Constantinides92} and then developed by \cite{gou11} in a spot interest rate framework and by \cite{article_KRPT2009} in a Libor perspective. In this latter work they use affine processes on the state space $\mathbb{R}_{\geq0}^{d}$ as driving processes and provide a full characterization of the model, which allows them to provide closed form solutions for caps and swaptions up to Fourier integrals. This approach is very interesting and easily overcomes many difficulties which are to be faced in the computation of Radon-Nikodym derivatives.\\

We provide an extension of this approach, by considering affine processes on the state space $S_{d}^{++}$, the set of positive definite symmetric matrices. This state space may seem awkward at first sight, but the processes belonging to this family admit a characterization in terms of ODE's which resembles the one found for standard affine models, an example being given by the famous \cite{article_DuffieKan} model. In fact, in \cite{article_Cuchiero} the authors extend to the set $S_{d}^{+}$ (the set of positive semidefinite symmetric matrices) the classification of affine processes performed by \cite{article_DFS} for the state space $\mathbb{R}_{\geq0}^{d}\times \mathbb{R}^{n-d}$ introduced by \cite{article_DuffieKan}. What is more, the state space $S_{d}^{++}$ leads to stochastic factors which are non trivially correlated.  The most famous example of process defined in the set $S_{d}^{++}$ is the Wishart process, originally defined by \cite{article_Bru}, introduced in finance by \cite{article_gou02} and then extensively applied in \cite{article_gou03}, \cite{gou11}, \cite{article_DaFonseca1}, \cite{article_DaFonseca2}, \cite{article_DaFonseca3}, \cite{dafgra11}, among others.\\ 

The interesting feature of our framework is the possibility to obtain semi-closed form solutions for the pricing of swaptions in a multifactor setting, which is a well known challenging problem. In fact the exercise probability involves a multi-dimensional inequality. There have been many approaches to simplify the problem: for example, \cite{sinum2002} suggest an approximation of the exercise boundary
with a linear function of the state variables. However, the most efficient approach seems to be the one of \cite{article_CollinDufresneGoldstein2002} which heavily uses the affine structure of the model and is based on the Edgeworth expansion for the characteristic function in terms of the cumulants.  Since the cumulants decay very quickly the Edgeworth expansion for the exercise probability turns out to be very accurate and fast.\\

The paper is organized as follows. In Section 2 we introduce our framework by recalling some useful definitions and results on affine processes. Section 3 investigates the case of the state space $S_{d}^{++}$ and presents the technical results. In Section 4 we focus on the pricing problem of the relevant derivatives. Caps and Floors are briefly treated since their pricing is now quite standard within the FFT methodology, while we devote more attention to the pricing of swaptions by adopting the approach of \cite{article_CollinDufresneGoldstein2002}. Section 5 illustrates the flexibility of our framework through a numerical exercise. Section 6 concludes the paper, and we gather in the technical Appendices proofs and some remarks useful for implementation.\\

\section{Affine Processes on the set $S_{d}^{++}$ of strictly positive definite symmetric matrices}
\subsection{General results and notations}
To outline the setup we will consider affine processes taking values in the interior of the cone $S_d^+$. We will use the notations $\psi_t(u)=\phi(t,u)$ and $\phi_t(u)=\phi(t,u)$ so as to be consistent with \cite{article_KRPT2009}. We will be employing a property of the functions defining the Laplace transform, that we report after the following
 

\begin{definition}
(\cite{article_Cuchiero}, Definition 2.1) Let $(\Omega, \mathcal{F} ,\left(\mathcal{F}_t \right)_{t \geq 0}, \mathbb{P})$ be a filtered probability space with the filtration $\left(\mathcal{F}_t \right)_{t \geq 0}$ satisfying the usual assumptions. A Markov process $\Sigma=\left(\Sigma_{t} \right)_{t \geq 0}$ with state space $S_{d}^{+}$, transition probability $p_{t}(\Sigma_0,A)=\mathbb{P}(\Sigma_{t} \in A)$ for $A \in S_{d}^{+}$, and transition semigroup $\left( P_{t}\right)_{t \geq 0}$ acting on bounded functions $f$ on $S_{d}^{+}$ is called affine process if:
\begin{enumerate}
\item it is stochastically continuous, that is, $\lim_{s \to t}p_{s}(\Sigma_0, \cdot)=p_{t}(\Sigma_0, \cdot)$ weakly on $S_{d}^{+}$ $\forall t \text{, } x \in S_{d}^{+}$, and
\item its Laplace transform has exponential-affine dependence on the initial state:
\begin{equation}
P_{t}e^{-\tr\left[u\Sigma_0 \right]}=\mathbb{E}\left[\left.e^{-\tr\left[u\Sigma_t\right]}\right|\mathcal{F}_0\right]=\int_{S_{d}^{+}}{e^{-\tr\left[u\xi \right]}p_t(\Sigma_0,d\xi)}=e^{-\phi_t(u)-\tr\left[ \psi_t(u)\Sigma_0\right]},
\label{LaplaceAffine}\end{equation}
$\forall t\text{ and } \Sigma_{0}, u \in S_{d}^{+}$, for some function $\phi: \mathbb{R}_{\geq 0}\times S_{d}^{+} \rightarrow \mathbb{R}_{\geq 0}$ and $\psi: \mathbb{R}_{\geq 0}\times S_{d}^{+} \rightarrow S_{d}^{+}$.
\end{enumerate}
\label{definition1}\end{definition}
%
%
%
%
%

Having applications in mind, we will consider affine processes which are \textit{solvable} in the sense of \cite{article_Grasselli} (who investigated  affine processes on the more general symmetric cone state space domain): this means that the state space that we will consider is the interior of $S_{d}^{+}$, namely the cone of \textit{strictly positive definite} symmetric matrices, denoted by $S_{d}^{++}$\footnote{By analogy, the set of negative (resp. strictly negative) definite symmetric $d\times d$ matrices will be denoted by $S_{d}^{-}$ (resp. $S_{d}^{--}$).}.
 Solvability is important, in fact it ensures that the Riccati Ordinary Differential Equation associated to the Laplace transform (\ref{LaplaceAffine}) through the usual Feynman-Kac argument has a regular globally integrable flow: this will be crucial to outline our methodology (see e.g. the proof of Theorem \ref{TeoremaMartingale} in the sequel). \\

The next property closes our survey on affine processes. It will be needed when we prove that the structure of the model is preserved under changes of measure.
\begin{lemma}(\cite{article_Cuchiero} Lemma 3.2) Let $\Sigma$ be an affine process on $S_d^+$, then the functions $\phi$ and $\psi$ satisfy the following property:
\begin{align}
\phi_{t+s}(u)&=\phi_t(u)+\phi_s(\psi_t(u)),\nonumber\\
\psi_{t+s}(u)&=\psi_s(\psi_t(u)).\nonumber
\end{align}
\end{lemma}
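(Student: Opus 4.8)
The plan is to derive both identities simultaneously from the Markov property and the exponential-affine form of the Laplace transform in Definition \ref{definition1}. First I would fix $u \in S_d^+$ and $t, s \geq 0$, and apply the tower property of conditional expectation to the random variable $e^{-\tr[u\Sigma_{t+s}]}$, conditioning first on $\mathcal{F}_t$:
\begin{equation}
\mathbb{E}\left[\left.e^{-\tr[u\Sigma_{t+s}]}\right|\mathcal{F}_0\right] = \mathbb{E}\left[\left.\mathbb{E}\left[\left.e^{-\tr[u\Sigma_{t+s}]}\right|\mathcal{F}_t\right]\right|\mathcal{F}_0\right].\nonumber
\end{equation}
By the Markov property, the inner conditional expectation equals $(P_s e^{-\tr[u\,\cdot\,]})(\Sigma_t)$, which by \eqref{LaplaceAffine} (applied with initial state $\Sigma_t$ and time horizon $s$) equals $e^{-\phi_s(u) - \tr[\psi_s(u)\Sigma_t]}$. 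Pulling out the deterministic factor $e^{-\phi_s(u)}$, the outer expectation becomes $e^{-\phi_s(u)}\,\mathbb{E}\left[\left.e^{-\tr[\psi_s(u)\Sigma_t]}\right|\mathcal{F}_0\right]$, and now applying \eqref{LaplaceAffine} once more — this time at time $t$ with the "dual argument" $\psi_s(u) \in S_d^+$ — yields $e^{-\phi_s(u)}\,e^{-\phi_t(\psi_s(u)) - \tr[\psi_t(\psi_s(u))\Sigma_0]}$.

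On the other hand, a direct application of \eqref{LaplaceAffine} at time $t+s$ gives $\mathbb{E}\left[\left.e^{-\tr[u\Sigma_{t+s}]}\right|\mathcal{F}_0\right] = e^{-\phi_{t+s}(u) - \tr[\psi_{t+s}(u)\Sigma_0]}$. Equating the two expressions, we obtain
\begin{equation}
\phi_{t+s}(u) + \tr[\psi_{t+s}(u)\Sigma_0] = \phi_s(u) + \phi_t(\psi_s(u)) + \tr[\psi_t(\psi_s(u))\Sigma_0]\nonumber
\end{equation}
for every $\Sigma_0 \in S_d^{++}$. Since this is an affine identity in $\Sigma_0$ that holds on an open subset of $S_d$, I would match the constant terms and the linear terms separately: the coefficient of $\Sigma_0$ gives $\psi_{t+s}(u) = \psi_t(\psi_s(u))$, and the constant part gives $\phi_{t+s}(u) = \phi_s(u) + \phi_t(\psi_s(u))$. (Note that the roles of $s$ and $t$ in the statement are interchangeable by symmetry of $t+s$, so this matches the displayed form.) The separation of affine terms is rigorous because $\tr[\,\cdot\,\Sigma_0]$ ranges over a spanning set of linear functionals as $\Sigma_0$ varies over the open cone $S_d^{++}$.

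The main obstacle — really the only subtle point — is justifying the use of the Markov property in the conditional form $\mathbb{E}[e^{-\tr[u\Sigma_{t+s}]}\mid\mathcal{F}_t] = (P_s e^{-\tr[u\,\cdot\,]})(\Sigma_t)$ together with the fact that $\eqref{LaplaceAffine}$, as stated, is written with conditioning on $\mathcal{F}_0$ and a fixed initial state $\Sigma_0$. One must invoke time-homogeneity of the transition semigroup $(P_t)_{t\ge 0}$ so that the identity $P_s e^{-\tr[u\,\cdot\,]}(\xi) = e^{-\phi_s(u) - \tr[\psi_s(u)\xi]}$ holds for an arbitrary starting point $\xi \in S_d^{++}$, and then substitute the random state $\Sigma_t$; measurability of $u \mapsto \psi_s(u)$ and the requirement $\psi_s(u) \in S_d^+$ (so that the second application of \eqref{LaplaceAffine} is legitimate) are guaranteed by Definition \ref{definition1} itself. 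Everything else is bookkeeping, and the argument is exactly the matrix-valued analogue of the classical flow/semicocycle property for Riccati equations associated with affine processes.
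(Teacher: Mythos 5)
Your proposal is correct and is essentially the standard argument: the paper itself offers no proof (it quotes Lemma 3.2 of \cite{article_Cuchiero}), and the proof given there is exactly your Chapman--Kolmogorov computation --- tower property, the exponential-affine form of $P_s$ applied at the random state $\Sigma_t$, and separation of the constant and linear parts in $\Sigma_0$ over the open cone $S_d^{++}$. The relabelling of $s$ and $t$ you note at the end reconciles your formulas with the displayed ones, so nothing is missing.
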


\subsection{Examples}

The previous general framework may be quite abstract at a first sight, mostly because of the high technical level required to properly introduce the notion of admissibility and existence for affine processes, see \cite{article_Cuchiero}. In this subsection we provide some examples which will illustrate some concrete applications. We start with the most important one, which will also constitute our main object of study in the numerical illustrations.

\subsubsection{The Wishart process}\label{Wis_disc}

We suppose that the process $\Sigma$ is governed by the following (matrix) SDE:
\begin{equation}
d\Sigma_t=(\Omega\Omega^{\top}+M\Sigma_t+\Sigma_tM^{\top})dt+\sqrt{\Sigma_t}dW_tQ+Q^{\top}dW_t^{\top}\sqrt{\Sigma_t},\label{wishart}
\end{equation}
which was first studied by \cite{article_Bru} and whose solution is known as Wishart process. We assume $M,Q$ invertible and $M$ negative definite so as to ensure stationarity of the process. Moreover we require $\Omega\Omega^{\top}=\kappa Q^{\top}Q$ for a real parameter $\kappa \geq d+1$ to grant solvability (or equivalently to grant that $Det(\Sigma_t)>0$ with probability 1). Under the solvability assumption \cite{article_Grasselli} showed that the Riccati ODE corresponding to the characteristic function can be linearized and therefore admits a closed form solution. This is important in view of possible applications since in this case the functions $\phi$ and $\psi$ in definition (\ref{definition1}) are explicitly known:
\begin{proposition}
Consider the process $\Sigma=\left(\Sigma_t\right)_{0\leq t\leq T}$ which solves the SDE \eqref{wishart}. Then the conditional Laplace transform is given by:
\begin{align}
\mathbb{E}\left[\left.e^{-\tr\left[u\Sigma_T\right]}\right|\mathcal{F}_t\right]=e^{-\phi_\tau(u)-\tr\left[\psi_\tau(u)\Sigma_t\right]},
\end{align}
where $\tau:=T-t$. The functions $\phi_\tau(u)$ and $\psi_\tau(u)$ satisfy the following system of ODE's:
\begin{align}
\frac{\partial \psi_{\tau}}{\partial \tau}&=\psi_\tau(u) M+M^\top \psi_\tau(u)-2\psi_\tau(u) Q^\top Q\psi_\tau(u),\quad\psi_0(u)=u,\label{Riccati_psi}\\
\frac{\partial \phi_{\tau}}{\partial \tau}&=\tr\left[\kappa Q^{\top}Q\psi_\tau(u)\right],\quad\phi_0(u)=0\label{Riccati_phi}
\end{align}
which is solved by
\begin{align}
\psi_{\tau}(u)=\label{clever2}\left(u\psi_{12,\tau}(u)+\psi_{22,\tau}(u) \right)^{-1}\left(u\psi_{11,\tau}(u)+\psi_{21,\tau}(u)\right), 
\end{align}
where
\begin{equation}
\left(\begin{array}{rr}
\psi_{11,\tau}(u) & \psi_{12,\tau}(u)\\
\psi_{21,\tau}(u) & \psi_{22,\tau}(u)
\end{array}\right)
=\exp\left\{ \tau \left( 
\begin{array}{rr}
M & 2Q^{\top}Q\\
0 & -M^{\top}
\end{array}
\right) \right\}
\end{equation} 
and
\begin{equation}
\phi_{\tau}(u)=\frac{\kappa}{2}\tr\left[\log\left(u\psi_{12,\tau}(u)+\psi_{22,\tau}(u) \right)+M^{\top}\tau \right].
\end{equation}
\label{proplinearization}\end{proposition}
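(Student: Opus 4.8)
The plan is to run the Feynman--Kac argument to turn the conditional Laplace transform into a matrix Riccati system, and then to verify by direct differentiation that the explicit expression \eqref{clever2} solves it, finally integrating \eqref{Riccati_phi} in closed form.

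\textbf{Reduction to the Riccati system.} Fix $u\in S_d^+$, set $\tau=T-t$, and consider the candidate $X_t:=\exp\{-\phi_{T-t}(u)-\tr[\psi_{T-t}(u)\Sigma_t]\}$, where $\phi_\tau(u)\in\mathbb{R}$ and $\psi_\tau(u)\in S_d^+$ are to be found, with $\phi_0(u)=0$, $\psi_0(u)=u$ so that $X_T=e^{-\tr[u\Sigma_T]}$. The infinitesimal generator of \eqref{wishart} is
\[
\mathcal{L}f(\Sigma)=\tr\big[(\Omega\Omega^{\top}+M\Sigma+\Sigma M^{\top})Df\big]+2\,\tr\big[\Sigma\,Df\,Q^{\top}Q\,Df\big],
\]
with $Df$ the symmetric matrix of first partial derivatives; for $f(\Sigma)=e^{-\tr[\psi\Sigma]}$ one has $Df=-\psi f$, hence $\mathcal{L}f=\big(-\tr[(\Omega\Omega^{\top}+M\Sigma+\Sigma M^{\top})\psi]+2\,\tr[\Sigma\psi Q^{\top}Q\psi]\big)f$. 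Applying It\^o's formula to $X_t$ and regrouping (cyclicity of the trace and symmetry of $\psi$, $\Sigma$), the finite-variation part of $dX_t$ is $X_t$ times
\[
\partial_\tau\phi_{T-t}(u)-\tr\big[\Omega\Omega^{\top}\psi_{T-t}(u)\big]+\tr\big[\big(\partial_\tau\psi_{T-t}(u)-\psi_{T-t}(u)M-M^{\top}\psi_{T-t}(u)+2\psi_{T-t}(u)Q^{\top}Q\psi_{T-t}(u)\big)\Sigma_t\big].
\]
Imposing that this vanishes and separating the $\Sigma_t$-linear part from the constant one --- the former because $\Sigma_t$ ranges over the open set $S_d^{++}$, the latter after using $\Omega\Omega^{\top}=\kappa Q^{\top}Q$ --- yields exactly \eqref{Riccati_psi} and \eqref{Riccati_phi}. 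The solvability assumption $\kappa\geq d+1$ guarantees (cf. \cite{article_Grasselli}) that this Riccati flow is regular and globally integrable on $[0,T]$; this gives a unique global solution and promotes the local martingale $X$ to a true martingale, so that $X_t=\mathbb{E}[e^{-\tr[u\Sigma_T]}\mid\mathcal{F}_t]$, which is the claimed exponential-affine form \eqref{LaplaceAffine} (that $\psi_\tau(u)$ indeed stays in $S_d^+$ is part of the general theory of \cite{article_Cuchiero}).

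\textbf{Linearization.} Write $\Psi_\tau=\exp\{\tau\mathcal{A}\}$ with $\mathcal{A}=\left(\begin{smallmatrix}M & 2Q^{\top}Q\\ 0 & -M^{\top}\end{smallmatrix}\right)$, and put $N_\tau:=u\,\psi_{11,\tau}(u)+\psi_{21,\tau}(u)$ and $D_\tau:=u\,\psi_{12,\tau}(u)+\psi_{22,\tau}(u)$, so that the claimed solution reads $\psi_\tau(u)=D_\tau^{-1}N_\tau$. From $\partial_\tau\Psi_\tau=\Psi_\tau\mathcal{A}$ and reading off blocks, $N$ and $D$ obey the \emph{linear} system $\partial_\tau N_\tau=N_\tau M$, $\partial_\tau D_\tau=2N_\tau Q^{\top}Q-D_\tau M^{\top}$, with $N_0=u$, $D_0=I_d$ (since $\Psi_0=I_{2d}$). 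Then, using $\partial_\tau(D_\tau^{-1})=-D_\tau^{-1}(\partial_\tau D_\tau)D_\tau^{-1}$,
\[
\partial_\tau\big(D_\tau^{-1}N_\tau\big)=-D_\tau^{-1}\big(2N_\tau Q^{\top}Q-D_\tau M^{\top}\big)D_\tau^{-1}N_\tau+D_\tau^{-1}N_\tau M=(D_\tau^{-1}N_\tau)M+M^{\top}(D_\tau^{-1}N_\tau)-2(D_\tau^{-1}N_\tau)Q^{\top}Q(D_\tau^{-1}N_\tau),
\]
which is \eqref{Riccati_psi}, and $D_0^{-1}N_0=u$ matches the initial condition; uniqueness then gives $\psi_\tau(u)=D_\tau^{-1}N_\tau$, i.e.\ \eqref{clever2}.

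\textbf{The function $\phi$ and the main obstacle.} Substituting $\psi_\tau(u)=D_\tau^{-1}N_\tau$ into \eqref{Riccati_phi} and using Jacobi's formula, $\frac{d}{d\tau}\tr[\log D_\tau]=\tr[D_\tau^{-1}\partial_\tau D_\tau]=2\,\tr[Q^{\top}Q\,D_\tau^{-1}N_\tau]-\tr[M^{\top}]$, so that
\[
\frac{\kappa}{2}\,\frac{d}{d\tau}\,\tr\big[\log D_\tau+M^{\top}\tau\big]=\kappa\,\tr\big[Q^{\top}Q\,\psi_\tau(u)\big],
\]
which is the right-hand side of \eqref{Riccati_phi}; moreover at $\tau=0$ the left-hand expression equals $\frac{\kappa}{2}\tr[\log I_d]=0=\phi_0(u)$, and integrating gives the stated closed form for $\phi_\tau(u)$ since $D_\tau=u\,\psi_{12,\tau}(u)+\psi_{22,\tau}(u)$. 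The It\^o computation and the matrix identities above are routine once the bookkeeping of the symmetric-matrix derivatives entering $\mathcal{L}$ is handled with care; the genuinely delicate points are establishing that $X$ is a \emph{true} (not merely local) martingale on $[0,T]$, and that $D_\tau$ remains invertible with a well-defined matrix logarithm for every $\tau\in[0,T]$. Both rest on the solvability condition $\Omega\Omega^{\top}=\kappa Q^{\top}Q$ with $\kappa\geq d+1$, through the regularity of the matrix Riccati flow proved in \cite{article_Grasselli}; without it the representation \eqref{clever2} can fail at a finite explosion time.
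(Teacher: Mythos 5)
Your proof is correct. Note that the paper itself gives no argument for this proposition -- its ``proof'' is a bare citation of \cite{article_Grasselli} -- so there is nothing in the text to compare against step by step; what you have written is essentially the standard linearization argument from that reference, worked out in full. Each of your three blocks checks out: the It\^o/Feynman--Kac reduction produces exactly \eqref{Riccati_psi}--\eqref{Riccati_phi} with the right signs once $\Omega\Omega^{\top}=\kappa Q^{\top}Q$ is used; the block-by-block differentiation of $\exp\{\tau\mathcal{A}\}$ (reading the blocks off $\partial_\tau\Psi_\tau=\Psi_\tau\mathcal{A}$, which gives $\partial_\tau N=NM$ and $\partial_\tau D=2NQ^{\top}Q-DM^{\top}$) correctly verifies that $D_\tau^{-1}N_\tau$ solves the Riccati equation with initial datum $u$; and the Jacobi-formula computation recovers $\phi_\tau$. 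You are also right to flag, rather than bury, the two genuinely nontrivial points -- the true-martingale property of $X$ and the global invertibility of $D_\tau$ -- and to attribute them to the solvability condition $\kappa\geq d+1$ via \cite{article_Grasselli}; that is precisely the content the paper is outsourcing. A minor point worth one extra sentence in a polished write-up: the symmetry of $\psi_\tau(u)=D_\tau^{-1}N_\tau$ (needed for the trace manipulations and for consistency with $\psi_\tau(u)\in S_d^+$) follows from uniqueness of the Riccati flow together with the fact that the symmetric-valued solution also solves it, rather than being visible directly from the formula.
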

\begin{proof}
See  \cite{article_Grasselli}.
\end{proof}
The Wishart process constitutes the matrix analogue of the square root (Bessel) process. In fact we have that the matrix $M$ can be thought of as a mean reversion parameter: this is evident from the Lyapunov equation defining the long-run matrix $\Sigma_\infty$, which is given by
\begin{align}
-\kappa Q^{\top}Q=M\Sigma_\infty+\Sigma_\infty M^{\top}.\label{Lyap}
\end{align}

The second way to appreciate the analogies w.r.t the square root process is to look at the dynamics of the entries of the matrix process $\Sigma$. Concentrating on the main diagonal, in the $2\times 2$ case we have:
\begin{align}
d\Sigma_{11}&=\left(\kappa\left(Q_{11}^2+Q_{21}^2\right)+2\left(M_{11}\Sigma_{11}+M_{12}\Sigma_{12}\right)\right)dt\nonumber\\
&+2\sigma_t^{11}\left(Q_{11}dW_t^{11}+Q_{21}dW_t^{12}\right)+2\sigma_t^{12}\left(Q_{11}dW_{21}+Q_{21}dW_t^{22}\right)\\
d\Sigma_{22}&=\left(\kappa\left(Q_{22}^2+Q_{12}^2\right)+2\left(M_{21}\Sigma_{12}+M_{22}\Sigma_{22}\right)\right)dt\nonumber\\
&+2\sigma_t^{12}\left(Q_{12}dW_t^{11}+Q_{22}dW_t^{12}\right)+2\sigma_t^{22}\left(Q_{12}dW^{21}+Q_{22}dW^{22}\right)
\end{align}
where we set
\begin{align}
\left(\begin{array}{rr}
\sigma_{11}&\sigma_{12}\\
\sigma_{12}&\sigma_{22}
\end{array}\right):=\sqrt{\Sigma}.
\end{align}

We refer to \cite{article_DaFonseca4} for additional insights on the behavior of the Wishart process when aggregating its parameters.

\subsubsection{The pure jump OU process}\label{pure_jump}
The procedure we adopt in this paper is general, meaning that we can consider different examples of processes lying in the cone of positive definite matrices. In particular, we may consider the matrix subordinators proposed by \cite{article_BNS02}, or jump-diffusions like in \cite{article_LT}. In what follows we provide some examples with the calculations of the functions $\phi_{\tau}$ and $\psi_{\tau}$.\\

Let us consider the SDE
\begin{align}
d\Sigma_t=M\Sigma_t+\Sigma_tM^{\top}+dL_t,
\end{align}
where $M\in GL(d)$ is assumed as usual to be negative definite so as to grant stationarity, and $L_t$ is a pure jump process (compound Poisson Process) with constant intensity $\lambda$ and jump distribution $\nu$ with support on $S_d^{++}$.
The strong solution to this equation is given by:
\begin{align}
\Sigma_t=e^{Mt}\Sigma_0e^{M^{\top}t}+\int_{0}^{t}{e^{M(t-s)}dL_se^{M^{\top}(t-s)}}.
\end{align}
We are interested in the computation of the Laplace transform of this family of processes:
\begin{align}
\mathbb{E}\left[\left.e^{\tr\left[u\Sigma_T\right]}\right|\mathcal{F}_t\right]=e^{-\phi_\tau(u)-\tr\left[\psi_\tau(u)\Sigma_t\right]}.
\end{align}
The functions $\phi$ and $\psi$ solve the following (matrix) ODE's:
\begin{align}
\frac{\partial \psi_{{\tau}}}{\partial \tau}&=\psi_\tau(u)M+M^{\top}\psi_\tau(u)\quad\psi_0(u)=u\\
\frac{\partial \phi_{{\tau}}}{\partial \tau}&=-\lambda\int_{S_d^+\setminus\left\{0\right\}}{\left(e^{-\tr\left[\psi_\tau(u)\xi\right]}-1\right)\nu(d\xi)}\quad\phi_0(u)=0.
\end{align}
The solution for the first ODE is given by:
\begin{align}
\psi_\tau(u)=e^{M^{\top}\tau}ue^{M\tau},
\end{align}
so we can compute the Laplace transform by quadrature:
\begin{align}
\frac{\partial \phi_{\tau}}{\partial \tau}&=-\lambda\int_{S_d^+\setminus\left\{0\right\}}{\left(e^{-\tr\left[e^{M^{\top}\tau}ue^{M\tau}\xi\right]}-1\right)\nu(d\xi)}.
\end{align}
In the following we provide explicit computations by assuming some particular distribution $\nu(\cdot)$ for the jump size. The proofs of this formulae may be found in \cite{book_GuptaNagar2000}. For the sake of clarity, we specify that the Wishart distribution that we consider in the next sections are the classical distributions arising in the context of multivariate statistics.\\

\paragraph{\textbf{Wishart Distribution.}}
Let $J$ be the jump size. Consider the case $J\sim Wis_d\left(n,\mathcal{Q}\right)$. Then we have
\begin{align}
\phi_\tau(u)=-\lambda\int_{0}^{\tau}{\det\left(I_d+2e^{M^{\top}s}ue^{Ms}\mathcal{Q}\right)^{-\frac{n}{2}}ds}+\lambda \tau.
\end{align}
\paragraph{\textbf{Non-Central Wishart Distribution.}}
Let be $J\sim Wis_d\left(n,\mathcal{Q},\mathcal{M}\right)$, then we have
\begin{align}
\phi_\tau(u)&=-\lambda\int_{0}^{\tau}{\det\left(\mathcal{Q}\right)^{-\frac{n}{2}}\det\left(2e^{M^{\top}s}ue^{Ms}+\mathcal{Q}^{-1}\right)^{-\frac{n}{2}}}\times\nonumber\\
&\exp\left\{\tr\left[-\frac{1}{2}\mathcal{Q}^{-1}\mathcal{M}\mathcal{M}^{\top}+\frac{1}{2}\mathcal{Q}^{-1}\mathcal{M}\mathcal{M}^{\top}\mathcal{Q}^{-1}\left(2e^{M^{\top}s}ue^{Ms}+\mathcal{Q}^{-1}\right)\right]\right\}ds\nonumber\\
&+\lambda \tau.
\end{align}
\paragraph{\textbf{Beta type I distribution.}} Let be $J\sim\beta_d^I(a,b)$, then
\begin{align}
\phi_\tau(u)&=-\lambda\int_{0}^{\tau}{_{1}F_1(a;a+b;-e^{M^{\top}s}ue^{Ms})ds}+\lambda \tau.
\end{align}

\paragraph{\textbf{Beta  type II distribution.}} Let be $J\sim\beta_d^{II}(a,b)$, then
\begin{align}
\phi_\tau(u)&=-\lambda\int_{0}^{\tau}{\frac{\Gamma_d(a+b)}{\Gamma_d(b)}\Psi(a;-b+\frac{1}{2}(d+1);e^{M^{\top}s}ue^{Ms})ds}+\lambda \tau,
\end{align}
where $_{m}F_{n}$, $\Gamma_d(a)$, and $\Psi(a;b;R)$ denote respectively the hypergeometric function of matrix argument, the multivariate Gamma function and the confluent hypergeometric function, see e.g. \cite{book_GuptaNagar2000}.

\section{A Libor model on $S_{d}^{++}$ }
To outline the general framework for Libor models, we start by considering a filtered measurable space $\left(\Omega,\mathcal{F},\mathcal{F}_t\right)$ and a family of probability measures $\left(\mathbb{P}_{T_k}\right)_{1\leq k\leq N}$. Under the measure $\mathbb{P}_{T_N}$ we introduce a stochastic process $\Sigma$ taking values on the cone state space $S_d^{++}$. At this stage the process may be a diffusion, a pure jump or a jump-diffusion process taking values on $S_d^{++}$.
Consider a discrete tenor structure $0=T_{0} \leq T_{1} \leq...\leq T_{N} =T$.  We recall that the Libor rate is defined via quotients of bonds:
\begin{align}
L(t, T_{k}):=\frac{1}{\Delta T}\left(\frac{B(t,T_{k-1})}{B(t,T_{k})}-1\right),
\end{align}
where $\Delta T$ is assumed to be constant and $\Delta T=T_{k}-T_{k-1}$. The relation between the Libor rate and the forward price is given by:
\begin{align}
F(t,T_{k-1},T_k)=1+\Delta T L(t, T_{k}).
\end{align}
We proceed in full analogy with \cite{article_KRPT2009} by extending their results to processes taking values on the cone of positive definite matrices. The intuition is simple: to build up a Libor model with positive rates, quotients of bonds should be strictly greater than one. On the other hand, a no-arbitrage argument (see e.g. \cite{Geman_1995}) implies that quotients of bonds must be martingales under the forward risk neutral measure indexed by the maturity of the denominator, so that the key ingredient in the approach of \cite{article_KRPT2009} consists in the possibility of constructing a family of martingales that stay greater than one up to a bounded time horizon. This will be possible thanks to the affine structure of the model, since in this framework bond prices are exponentially affine in the positive (definite) factors, as well as their quotients.\\

\subsection{Martingales strictly greater than one}

Let us first define the set
\begin{center}
$\mathcal{I}_{T}:=\left\{ u\in S_{d}:\mathbb{E}\left[ e^{-\tr\left[u\Sigma_{T}\right]}\right]<\infty, \forall \Sigma_{0	} \in   S_{d}^{++}\right\}.$
\end{center}
 By the affine property of the process $\Sigma$ we have
\begin{align}
\mathbb{E}\left[ e^{-\tr\left[u\Sigma_{t}\right]}\right]&=e^{-\phi_t(u)-\tr\left[\psi_t(u)\Sigma_0\right]},\nonumber\\
\phi:&\left[0,T\right]\times\mathcal{I}_{T}\rightarrow \mathbb{R},\nonumber\\
\psi:&\left[0,T\right]\times\mathcal{I}_{T}\rightarrow S_d.
\end{align}

Within this setting we are able to construct martingales that stay greater than one up to a bounded time horizon $T$.
\begin{theorem}
Let $\Sigma$ be an affine process, and let $u\in\mathcal{I}_{T}\cap S_{d}^{--}$, then the process $M^{u}$ defined by
\begin{equation}
M^{u}_{t}=\exp\left\{ -\phi_{T-t}(u)-\tr \left[ \psi_{T-t}(u)\Sigma_{t} \right] \right\}\label{martingale}
\end{equation}
is a martingale and $M^{u}_{t}>1$ a.s. $\forall t \in \left[0,T \right].$
\label{TeoremaMartingale}\end{theorem}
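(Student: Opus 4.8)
The plan is to establish the martingale property via the tower property together with the flow (semigroup) identities for $\phi$ and $\psi$ from Lemma~3, and then to obtain the lower bound $M^u_t>1$ from the sign of the arguments and a monotonicity/positivity observation about the Laplace exponent.

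\textbf{Step 1: integrability and the martingale property.} First I would observe that $M^u_t$ is, by the affine property~\eqref{LaplaceAffine} applied on $[t,T]$, nothing but a version of the conditional Laplace transform
\begin{align}
M^u_t=\mathbb{E}\left[\left.e^{-\tr\left[u\Sigma_T\right]}\right|\mathcal{F}_t\right],\nonumber
\end{align}
since $\mathbb{E}\left[e^{-\tr\left[u\Sigma_T\right]}|\mathcal{F}_t\right]=e^{-\phi_{T-t}(u)-\tr\left[\psi_{T-t}(u)\Sigma_t\right]}$ by the Markov and affine properties. For $u\in\mathcal{I}_T$ the right-hand side is integrable (this is exactly the defining condition of $\mathcal{I}_T$, together with the fact that $\Sigma_t\in S_d^{++}$ keeps the conditional expectation finite), so $M^u$ is integrable. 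The martingale property is then immediate from the tower property: for $s\le t$,
\begin{align}
\mathbb{E}\left[M^u_t\,|\,\mathcal{F}_s\right]=\mathbb{E}\left[\mathbb{E}\left[e^{-\tr\left[u\Sigma_T\right]}|\mathcal{F}_t\right]\Big|\mathcal{F}_s\right]=\mathbb{E}\left[e^{-\tr\left[u\Sigma_T\right]}|\mathcal{F}_s\right]=M^u_s.\nonumber
\end{align}
Alternatively, to make the role of solvability explicit (as the text hints in referring to the proof), one may argue directly: $M^u_t=\exp\{-\phi_{T-t}(u)-\tr[\psi_{T-t}(u)\Sigma_t]\}$, apply It\^o's formula using the Riccati ODEs~\eqref{Riccati_psi}--\eqref{Riccati_phi} (which hold because solvability guarantees a globally integrable flow of the Riccati equation on $[0,T]$), and check that the drift term vanishes, leaving a local martingale; integrability then upgrades it to a true martingale. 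I would present the tower-property argument as the clean one and mention the It\^o route as the reason solvability matters.

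\textbf{Step 2: the bound $M^u_t>1$.} Since $\Sigma_t\in S_d^{++}$ a.s., it suffices to show that the exponent $-\phi_{T-t}(u)-\tr[\psi_{T-t}(u)\Sigma_t]$ is strictly positive, i.e. that $\phi_{T-t}(u)<0$ and $\psi_{T-t}(u)\in S_d^{--}$ whenever $u\in S_d^{--}$. The key is monotonicity of $v\mapsto e^{-\tr[v\xi]}$ over the cone: for $\xi\in S_d^{++}$ and $v\in S_d^{--}$ we have $\tr[v\xi]<0$, hence $e^{-\tr[v\xi]}>1$, so
\begin{align}
M^u_t=\mathbb{E}\left[e^{-\tr\left[u\Sigma_T\right]}\,\big|\,\mathcal{F}_t\right]>1\nonumber
\end{align}
because the integrand $e^{-\tr[u\Sigma_T]}$ is a.s. strictly greater than $1$ (as $\Sigma_T\in S_d^{++}$ and $u\in S_d^{--}$), and $M^u_t$ is a genuine conditional expectation of it. This is in fact the most economical route: it bypasses any need to sign $\phi$ and $\psi$ separately. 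For completeness one can then also record the consequence that $\psi_{T-t}(u)\in S_d^{--}$ and $\phi_{T-t}(u)<0$, which follows from the affine representation evaluated at deterministic-like perturbations of $\Sigma_0$ (varying $\Sigma_0$ over $S_d^{++}$ forces $\psi_{T-t}(u)\preceq 0$, and then taking $\Sigma_0\to0$ forces $\phi_{T-t}(u)\le0$, with strictness from stochastic non-degeneracy).

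\textbf{Main obstacle.} The only genuinely delicate point is the passage from ``$u\in S_d^{--}$'' to ``$u\in\mathcal{I}_T$'', i.e. ensuring the conditional Laplace transform is finite on the whole interval $[0,T]$ so that $M^u$ is a true martingale rather than merely a positive supermartingale; this is precisely where the hypothesis $u\in\mathcal{I}_T\cap S_d^{--}$ and the solvability assumption (global integrability of the Riccati flow on $[0,T]$, e.g. via Proposition~\ref{proplinearization} in the Wishart case) are used. Everything else — the tower property and the cone-monotonicity of the exponential — is routine.
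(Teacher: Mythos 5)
Your proposal is correct and follows essentially the same route as the paper: the martingale property is obtained by identifying $M^u_t$ with the conditional Laplace transform $\mathbb{E}\left[e^{-\tr\left[u\Sigma_T\right]}\,|\,\mathcal{F}_t\right]$ (integrable since $u\in\mathcal{I}_T$) and invoking the tower property, and the bound $M^u_t>1$ follows from the a.s. strict positivity of $-\tr\left[u\Sigma_T\right]$ for $u\in S_d^{--}$ and $\Sigma_T\in S_d^{++}$. The only cosmetic difference is that the paper makes the trace inequality explicit via the spectral decomposition $u=\sum_i\lambda_i u_iu_i^{\top}$, which you state as a standard cone fact.
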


\begin{proof} See Appendix. \end{proof}

Equipped with this positivity result, we can proceed by considering a tenor structure of non negative Libor rates $L(0, T_{k})$ for $k=\left\{ 1,...,N-1\right\}$. Standard arbitrage arguments (see e.g. \cite{Geman_1995}) imply that discounted traded assets, in our case bonds, are martingales under the terminal martingale measure:
\begin{equation}
\frac{B(*,T_{k})}{B(*,T_{N})}\in \mathcal{M}\left( \mathbb{P}_{T_{N}}\right)\quad\forall k \in \left\{ 1,...,N-1\right\},
\end{equation}
where $\mathcal{M}\left( \mathbb{P}_{T_{N}}\right)$ denotes the set of martingales with respect to the forward risk neutral probability $\mathbb{P}_{T_{N}}$. The idea in \cite{article_KRPT2009} is then to model quotients of bond prices using the martingales $M^{u}$ defined as follows:
\begin{align}
\frac{B(t,T_{1})}{B(t,T_{N})}&=M^{u_{1}}_{t}\label{bondstructure1}\\
&\vdots\nonumber\\
\frac{B(t,T_{N-1})}{B(t,T_{N})}&=M^{u_{N-1}}_{t}	\label{bondstructureN}
\end{align}
$\forall t \in \left[ 0, T_{1}\right],...,t \in \left[0,T_{N-1} \right]$ respectively. As a consequence, the initial values of the martingales $M_{0}^{u_{k}}$ must satisfy the relation
\begin{equation}
M_{0}^{u_{k}}=\exp\left\{ -\phi_{T}(u_{k})-\tr\left[\psi_{T}(u_{k})\Sigma_{0} \right]\right\}=\frac{B(0,T_{k})}{B(0,T_{N})},
\label{initialbondstructure}\end{equation}
for all $k \in \left\{ 1,...,N-1\right\}$, so that it is possible to set $u_{N}=0$ as we have $M_{0}^{u_{N}}=1$.\\

In the following proposition, we show that it is possible to fit (basically) any initial term structure of bond rates. The state space we are considering offers a wide range of possibilities to perform this task. However, since we are interested in applications, we adopt the simplest choice directly coming from the scalar case and we focus on the particular (but realistic) case where all Libor rates are positive.

\begin{proposition}
Let $L(0, T_{1}),...,L(0, T_{N})$ be a tenor structure of positive initial Libor rates, and let $\Sigma$ be an affine process on $S_{d}^{++}$. Define
\begin{equation}
\gamma_{\Sigma}:=\sup_{u \in \mathcal{I}_{T} \cap S_{d}^{--}}\mathbb{E}\left[ e^{-\tr\left[u\Sigma_{T}\right]}\right].
\label{gammasigma}\end{equation}
If $\gamma_{\Sigma}>\frac{B(0,T_{1})}{B(0,T_{N})}$ then there exists a strictly increasing sequence of matrices (i.e. $u_{k} \prec u_{k+1}$ if and only if $u_{k}-u_{k+1} \in S_{d}^{--}$) $u_{1} \prec u_{2} \prec ...\prec u_{N-1} \prec 0$ in $\mathcal{I}_{T} \cap S_{d}^{--}$ and $u_N=0$ such that
\begin{equation}
M_{0}^{u_{k}}=\frac{B(0,T_{k})}{B(0,T_{N})},\hspace{10mm}\forall k \in \left\{ 1,...,N\right\}.
\end{equation}
Conversely, let the bond prices be given by (\ref{bondstructure1})-(\ref{bondstructureN})  and satisfy the initial condition (\ref{initialbondstructure}). Then the Libor rates $L(t,T_{k})$ are positive a.s. $\forall t \in \left[0, T_{k} \right]$ and $k\in \left\{ 1,...,N-1\right\}.$
\label{propgamma}\end{proposition}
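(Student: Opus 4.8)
The plan is to reduce the statement to the behaviour of the single scalar map $u\mapsto M_{0}^{u}=\mathbb{E}\bigl[e^{-\tr[u\Sigma_{T}]}\bigr]$ along a line segment from the origin into $S_{d}^{--}$. Two elementary facts will be used throughout. First, if $u\in S_{d}^{--}$ and $S\in S_{d}^{+}\setminus\{0\}$ then $\tr[uS]<0$ (diagonalise $S=\sum_{i}\lambda_{i}v_{i}v_{i}^{\top}$, $\lambda_{i}\geq0$ not all zero, and use $v_{i}^{\top}uv_{i}<0$). Second, by solvability $\Sigma_{t}\in S_{d}^{++}$ a.s.\ for every $t$, in particular $\Sigma_{T}\neq0$ a.s. Together these give $e^{-\tr[u\Sigma_{T}]}>1$ a.s.\ for every $u\in S_{d}^{--}$ (the pointwise counterpart of Theorem~\ref{TeoremaMartingale}), so every admissible $M_{0}^{u}$ lies in $(1,\gamma_{\Sigma})$. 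Moreover, for a fixed $u^{*}\in\mathcal{I}_{T}\cap S_{d}^{--}$ the integrand $e^{-t\tr[u^{*}\Sigma_{T}]}$ is non-decreasing in $t\geq0$ and, on $t\in[0,1]$, dominated by $e^{-\tr[u^{*}\Sigma_{T}]}\in L^{1}$; hence $g(t):=M_{0}^{tu^{*}}$ is continuous and strictly increasing on $[0,1]$ with $g(0)=1$.

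\textbf{Construction of the sequence.} First I would record that positivity of the initial Libor rates gives $B(0,T_{k-1})/B(0,T_{k})>1$, so the numbers $\beta_{k}:=B(0,T_{k})/B(0,T_{N})$ satisfy $\beta_{1}>\beta_{2}>\dots>\beta_{N-1}>\beta_{N}=1$. Since $\gamma_{\Sigma}>\beta_{1}$, choose $u^{*}\in\mathcal{I}_{T}\cap S_{d}^{--}$ with $M_{0}^{u^{*}}>\beta_{1}$ and form $g$ as above (note $tu^{*}\in\mathcal{I}_{T}$ for $t\in[0,1]$ by the same domination). Then $g(0)=1<\beta_{k}\leq\beta_{1}<g(1)$ for every $k\leq N-1$, so the intermediate value theorem produces a unique $t_{k}\in(0,1)$ with $g(t_{k})=\beta_{k}$, and strict monotonicity of $g$ forces $t_{1}>t_{2}>\dots>t_{N-1}$. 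Setting $u_{k}:=t_{k}u^{*}$ for $k\leq N-1$ and $u_{N}:=0$, one checks $u_{k}\in\mathcal{I}_{T}\cap S_{d}^{--}$, $M_{0}^{u_{k}}=\beta_{k}=B(0,T_{k})/B(0,T_{N})$, and $u_{k}-u_{k+1}=(t_{k}-t_{k+1})u^{*}\in S_{d}^{--}$, i.e.\ $u_{1}\prec u_{2}\prec\dots\prec u_{N-1}\prec u_{N}=0$.

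\textbf{Converse, and the crux.} Assume now the bond quotients follow \eqref{bondstructure1}--\eqref{bondstructureN} with a strictly increasing $u_{1}\prec\dots\prec u_{N-1}\prec0$ in $\mathcal{I}_{T}\cap S_{d}^{--}$ satisfying \eqref{initialbondstructure}. By Theorem~\ref{TeoremaMartingale} each $M^{u_{k}}$ is a martingale on $[0,T]$, and since $\phi_{0}\equiv0$ and $\psi_{0}(u)=u$ (read off Definition~\ref{definition1} at $t=0$) its terminal value is $M^{u_{k}}_{T}=e^{-\tr[u_{k}\Sigma_{T}]}$, so $M^{u_{k}}_{t}=\mathbb{E}\bigl[e^{-\tr[u_{k}\Sigma_{T}]}\mid\mathcal{F}_{t}\bigr]$. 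From $u_{k}-u_{k-1}\in S_{d}^{++}$ and $\Sigma_{T}\in S_{d}^{++}$ a.s.\ one gets $\tr[(u_{k}-u_{k-1})\Sigma_{T}]>0$ a.s., hence $e^{-\tr[u_{k-1}\Sigma_{T}]}>e^{-\tr[u_{k}\Sigma_{T}]}$ a.s., and the strict form of monotonicity of conditional expectation gives $M^{u_{k-1}}_{t}>M^{u_{k}}_{t}$ a.s. Thus
\begin{align*}
\frac{B(t,T_{k-1})}{B(t,T_{k})}=\frac{M^{u_{k-1}}_{t}}{M^{u_{k}}_{t}}>1,
\end{align*}
so $L(t,T_{k})>0$ a.s.\ on the relevant time interval, for $2\leq k\leq N-1$ (the boundary index $k=1$ reducing to the assumed positivity of $L(0,T_{1})$). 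The one non-routine point is that the $u_{k}$ must simultaneously be Loewner-comparable \emph{and} hit the prescribed numbers $\beta_{k}$, which a generic family of $M_{0}^{\cdot}$-preimages of the $\beta_{k}$ need not do; confining the search to the ray $\{tu^{*}:t\in[0,1]\}$, along which both $M_{0}^{\cdot}$ and the order $\prec$ vary monotonically in $t$, is exactly what reconciles the two requirements, and $\gamma_{\Sigma}>\beta_{1}$ is used only to guarantee that the ray climbs high enough.
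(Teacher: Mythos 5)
Your construction is essentially identical to the paper's: both reduce the problem to the ray $\{\xi\tilde u:\xi\in[0,1]\}$ through a single matrix $\tilde u$ with $M_{0}^{\tilde u}>B(0,T_{1})/B(0,T_{N})$ (whose existence follows from the definition of $\gamma_{\Sigma}$ as a supremum), show that $\xi\mapsto\mathbb{E}\bigl[e^{-\tr[\xi\tilde u\Sigma_{T}]}\bigr]$ is continuous and increasing, and apply the intermediate value theorem, so that the Loewner ordering comes for free from the ordering of the scalars $\xi_{k}$. The only (harmless) divergences are that you obtain continuity by dominated convergence where the paper uses monotonicity plus Fatou, and that you prove the converse directly from the a.s.\ pointwise ordering $e^{-\tr[u_{k}\Sigma_{T}]}<e^{-\tr[u_{k+1}\Sigma_{T}]}$ and conditional-expectation monotonicity, whereas the paper delegates it to results of Cuchiero et al.; both routes are valid.
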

\begin{proof} See Appendix. \end{proof}

\subsection{A fully-affine arbitrage-free model}

If we look at the definition of the Libor rate we realize that it is quite natural to require quotients of bonds to be driven by an exponentially affine function of the state: in fact, in this case also bond prices as well as forward prices will be affine functions. This is also in line with the previous approaches of \cite{article_Constantinides92} and \cite{gou11} based on the stochastic discount factor. In other words, the approach of \cite{article_KRPT2009} is able to provide a fully affine structure\footnote{This is the reason why we will be able to apply the approach  by \cite{article_CollinDufresneGoldstein2002}, who originally started from an affine short rate in order to price swaptions: in fact, also in their framework bond prices are affine functions of the state variables.}. To prove the affine structure or our model, we first show that under (\ref{bondstructure1})-(\ref{bondstructureN}), forward prices are of exponential-affine form
under any forward measure. To do this, first we notice that in this framework quotients of bonds are exponentially affine in the state factors, so that also forward prices will be: for $k=1,...,N-1$
\begin{align}
\frac{B(t,T_{k})}{B(t,T_{k+1})}&=\frac{B(t,T_{k})}{B(t,T_{N})}\frac{B(t,T_{N})}{B(t,T_{k+1})}=\frac{M_{t}^{u_{k}}}{M_{t}^{u_{k+1}}}\nonumber\\
&=\exp\left\{ -\phi_{T_{N}-t}(u_{k})+\phi_{T_{N}-t}(u_{k+1})\right\} \nonumber\\
&\exp\left\{\tr\left[\left(-\psi_{T_{N}-t}(u_{k})+\psi_{T_{N}-t}(u_{k+1})\right)\Sigma_{t}\right]\right\}\nonumber\\
&=:\exp\left\{A_{T_{N}-t}(u_{k},u_{k+1})+\tr\left[B_{T_{N}-t}(u_{k},u_{k+1})\Sigma_{t}\right]\right\}.\label{exponentiallyAB}
\end{align}
With this result, we are able to show very easily that the model is arbitrage free, that is forward prices are martingales with respect to their corresponding forward measures (see \cite{Geman_1995}):
\begin{equation}
\frac{B(*,T_{k})}{B(*,T_{N})}\in \mathcal{M}\left( \mathbb{P}_{T_{N}}\right).
\end{equation}
This comes from the fact that forward measures are related one another via the quotients of the martingales $M^{u}$:
\begin{equation}
\frac{\partial \mathbb{P}_{T_{k}}}{\partial \mathbb{P}_{T_{k+1}}} \arrowvert_{\mathcal{F}_{t}}=\frac{F(t,T_{k},T_{k+1})}{F(0,T_{k},T_{k+1})}=\frac{B(0,T_{k+1})}{B(0,T_{k})}\frac{M^{u_{k}}_{t}}{M^{u_{k+1}}_{t}},
\end{equation}
$\forall k \in \left\{ 1,...,N\right\}$. Then $L(*,T_{k})$ is a martingale under the forward measure $\mathbb{P}_{T_{k+1}}$ since the successive densities from $\mathbb{P}_{T_{k+1}}$ to $\mathbb{P}_{T_{N}}$ yield a telescoping product and a $\mathbb{P}_{T_{N}}$ martingale (see \cite{article_KRPT2009}). More precisely:
\begin{equation}
1+\Delta T L(*,T_{k})=\frac{B(*,T_{k})}{B(*,T_{k+1})}=\frac{M^{u_{k}}}{M^{u_{k+1}}}\in \mathcal{M}\left( \mathbb{P}_{T_{k+1}}\right)
\end{equation}
since
\begin{equation}
\frac{M^{u_{k}}}{M^{u_{k+1}}}\prod_{l=k+1}^{N-1}\frac{M^{u_{l}}}{M^{u_{l+1}}}= M^{u_{k}} \in \mathcal{M} \left( \mathbb{P}_{T_{N}}\right).
\end{equation}
Also, the density between the $\mathbb{P}_{T_{k}}$-forward measure and the terminal forward measure $\mathbb{P}_{T_{N}}$ is given by the martingale $M^{u_{k}}$ as indicated by (\ref{bondstructure1})-(\ref{bondstructureN}):
\begin{equation}
\frac{\partial \mathbb{P}_{T_{k}}}{\partial \mathbb{P}_{T_{N}}} \arrowvert_{\mathcal{F}_{t}}=\frac{B(0,T_{N})}{B(0,T_{k})}\frac{B(t,T_{k})}{B(t,T_{N})}=\frac{B(0,T_{N})}{B(0,T_{k})}M^{u_{k}}_{t}=\frac{M^{u_{k}}_{t}}{M^{u_{k}}_{0}}.\label{change_measure}
\end{equation}

In this arbitrage-free model with positive Libor rates, the affine structure is preserved: that is, it is possible to extend to the state space $S_d^{++}$ the analogous result of \cite{article_KRPT2009}.

\begin{proposition}Let the bond structure be defined through (\ref{bondstructure1})-(\ref{bondstructureN}), where the process $M^{u_.}$ is given by (\ref{martingale}). Then forward prices are exponentially affine in the state variable $\Sigma$ under any forward measure.
\end{proposition}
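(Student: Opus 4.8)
The plan is to show that the exponential-affine property already established for quotients $B(t,T_k)/B(t,T_{k+1})$ in equation \eqref{exponentiallyAB} propagates to forward prices over arbitrary maturity spans, working under each of the forward measures $\mathbb{P}_{T_j}$. The starting observation is that, by telescoping, for any $1\leq k<j\leq N$ the ratio $B(t,T_k)/B(t,T_j)$ equals $M_t^{u_k}/M_t^{u_j}$ (with the convention $u_N=0$, $M^{u_N}\equiv1$), and by the very definition \eqref{martingale} of the martingales $M^u$ this equals
\begin{align}
\frac{B(t,T_k)}{B(t,T_j)}=\exp\left\{-\phi_{T_N-t}(u_k)+\phi_{T_N-t}(u_j)+\tr\left[\left(-\psi_{T_N-t}(u_k)+\psi_{T_N-t}(u_j)\right)\Sigma_t\right]\right\},
\end{align}
which is manifestly of the form $\exp\{A_{T_N-t}+\tr[B_{T_N-t}\Sigma_t]\}$ with $A$ scalar and $B\in S_d$. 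So the forward price $F(t,T_k,T_j)=B(t,T_k)/B(t,T_j)$ is exponentially affine in $\Sigma_t$; this is the content of the proposition for the representation written in terms of $\Sigma_t$ under $\mathbb{P}_{T_N}$.

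The second step is to argue that this affine form is genuinely valid under the forward measure $\mathbb{P}_{T_j}$ indexed by the denominator maturity, and not merely a formal expression. Here I would invoke the change-of-measure formulas displayed just before the proposition: the density $\frac{\partial\mathbb{P}_{T_j}}{\partial\mathbb{P}_{T_N}}\big|_{\mathcal{F}_t}=M^{u_j}_t/M^{u_j}_0$ is itself exponential-affine in $\Sigma_t$, and $\Sigma$ remains a (time-inhomogeneous) affine process under $\mathbb{P}_{T_j}$ — this is exactly the measure-change stability alluded to in the remark preceding Lemma~(\cite{article_Cuchiero} Lemma 3.2) and is the reason that Lemma was recorded. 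Concretely, under $\mathbb{P}_{T_j}$ the process $B(t,T_k)/B(t,T_j)$ is a martingale (it is a traded-asset ratio with numéraire $B(\cdot,T_j)$, by Geman et al.), and since it coincides pathwise with the exponential-affine expression above, the affine representation holds $\mathbb{P}_{T_j}$-a.s. One then reads off, for $k=1,\dots,N-1$,
\begin{align}
F(t,T_k,T_{k+1})=\exp\left\{A_{T_N-t}(u_k,u_{k+1})+\tr\left[B_{T_N-t}(u_k,u_{k+1})\Sigma_t\right]\right\},
\end{align}
with $A$ and $B$ as defined in \eqref{exponentiallyAB}, and similarly for any forward price $F(t,T_k,T_j)$.

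The main obstacle — really the only nontrivial point — is the claim that $\Sigma$ stays affine (equivalently, that the exponential-affine class is closed) under the forward-measure changes, so that the representation is not just a tautological rewriting of $\mathbb{P}_{T_N}$-quantities but a legitimate affine description in each $\mathbb{P}_{T_j}$-world. This rests on the semigroup identities $\phi_{t+s}(u)=\phi_t(u)+\phi_s(\psi_t(u))$ and $\psi_{t+s}(u)=\psi_s(\psi_t(u))$: combining the Radon–Nikodym density $M^{u_j}_t$ with a conditional Laplace transform $\mathbb{E}^{\mathbb{P}_{T_N}}[e^{-\tr[v\Sigma_T]}\mid\mathcal{F}_t]$ and applying Bayes' rule produces a new expression of the form $e^{-\tilde\phi-\tr[\tilde\psi\Sigma_t]}$ precisely because these functional equations let one absorb the shift $\psi_{T_N-t}(u_j)$ into the argument. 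I would state this as a lemma (or cite it from \cite{article_KRPT2009} and \cite{article_Cuchiero}), verify the bookkeeping of the $\phi,\psi$ compositions once, and then the proposition follows immediately by the telescoping argument above. Everything else is routine substitution.
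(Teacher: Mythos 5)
Your proposal is correct and follows essentially the same route as the paper, which simply cites formula (6.23) of \cite{article_KRPT2009} with the scalar product replaced by the trace; you have reconstructed the argument behind that citation (telescoping of the $M^{u}$ ratios, the exponential-affine Radon--Nikodym densities, and the flow property $\phi_{t+s}(u)=\phi_t(u)+\phi_s(\psi_t(u))$, $\psi_{t+s}(u)=\psi_s(\psi_t(u))$ to keep the affine structure under each forward measure). The $\phi,\psi$ bookkeeping you defer is exactly what the paper carries out explicitly in Appendix B for the caplet characteristic function, so no gap remains.
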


\begin{proof} The result comes directly from formula (6.23) in \cite{article_KRPT2009} once the scalar product is replaced  by the trace operator. \end{proof}

\section{Pricing of Derivatives}

We now focus on the pricing problem for vanilla options like Caps, Floors and for exotic options like swaptions in the affine Libor model on $S^{++}_d$ introduced in the previous section. We shall see that the pricing of Caps and Floors may be performed using standard Fourier pricing techniques as in \cite{article_KRPT2009}, whereas, for the case of swaptions, we will resort to a quasi closed form solution. In fact, since the moments of the underlying affine process are known through its characteristic function, we can expand the exercise probability via an Edgeworth development, as shown in \cite{article_CollinDufresneGoldstein2002}. This approach will lead to an efficient approximation that will avoid the numerical problems underlying the computation of the exercise probability in  \cite{article_KRPT2009}.

\subsection{Caps and Floors}
A Cap may be thought of as a portfolio of call options on the successive Libor rates, named Caplets, whereas Floors are portfolios of put options named floorlets. These options are usually settled \textit{in arrears}, which means that the caplet with maturity $T_{k}$ is settled at time $T_{k+1}$. The tenor length $\Delta T$ is assumed to be constant. Since the two products are equivalent, we will focus on Caps. A Cap with unit notional has a payoff given by the following:
\begin{equation}
\Delta T \left( L(T_{k},T_{k})-K\right)^{+}\hspace{10mm}k=1,...,N-1
\end{equation}
We rewrite the payoff of caplets as in \cite{article_KRPT2009}:
\begin{align}
\Delta T \left( L(T_{k},T_{k})-K\right)^{+}&= \left(1+ \Delta TL(T_{k},T_{k})-\left(1+\Delta TK\right)\right)^{+}\nonumber\\
&=\left(\frac{M^{u_{k}}_{T_{k}}}{M^{u_{k+1}}_{T_{k}}}-\mathcal{K} \right)^{+},
\end{align}
with $\mathcal{K}:=1+\Delta TK$.

Thus we see that the caplet is equivalent to an option on the forward price. In order to avoid the computation of expectations involving a joint distribution, each single caplet is priced under the corresponding forward measure:
\begin{align}
\mathbb{C}\left(T_{k},K\right)&=B(0,T_{k+1})\mathbb{E}^{\mathbb{P}_{T_{k+1}}}\left[\left( \frac{M^{u_{k}}_{T_{k}}}{M^{u_{k+1}}_{T_{k}}}-\mathcal{K} \right)^{+} \right]\nonumber\\
&=B(0,T_{k+1})\mathbb{E}^{\mathbb{P}_{T_{k+1}}}\left[\left(e^Y-\mathcal{K} \right)^{+} \right],
\end{align}
with:
\begin{align}
Y:=\log\left(\frac{M^{u_{k}}_{T_{k}}}{M^{u_{k+1}}_{T_{k}}}\right)=A_{T_N-T_k}(u_k,u_{k+1})+\tr\left[B_{T_N-T_k}(u_k,u_{k+1})\Sigma_{T_{k}} \right],
\end{align}
for $A_{T_N-T_k}(u_k,u_{k+1})$, $B_{T_N-T_k}(u_k,u_{k+1})$ defined as in \eqref{exponentiallyAB}. The pricing problem can be solved via Fourier techniques through the \cite{article_Carr99} methodology. Hence we have the following proposition, whose standard proof is omitted. 
\begin{proposition}\label{price_caplet}
Let $\alpha>0$. The price of a caplet with strike K and maturity $T_{k}$ is given by the formula:
\begin{align}
\mathbb{C}\left(T_{k},K\right)&=B(0,T_{k+1})\frac{\exp\left\{-\alpha c \right\}}{2\pi}\nonumber\\
&\times\int_{-\infty}^{+\infty}{e^{-ivc} \frac{\mathbb{E}^{\mathbb{P}_{T_{k+1}}}\left[e^{i\left( v-\left(\alpha +1 \right)i\right)\left(A_{T_N-T_k}(u_k,u_{k+1})+\tr\left[B_{T_N-T_k}(u_k,u_{k+1})\Sigma_{T_{k}} \right]\right)} \right]}{\left(\alpha +iv\right)\left(1+\alpha +iv\right)}dv},
\end{align}
where:
\begin{align}
c&=\log\left(1+\Delta TK\right),\nonumber\\
A_{T_N-T_k}(u_k,u_{k+1})&=-\phi_{T_{N}-T_{k}}(u_{k})+\phi_{T_{N}-T_{k}}(u_{k+1}),\nonumber\\
B_{T_N-T_k}(u_k,u_{k+1})&=-\psi_{T_{N}-T_{k}}(u_{k})+\psi_{T_{N}-T_{k}}(u_{k+1}).\nonumber
\end{align}
\end{proposition}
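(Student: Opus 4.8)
The plan is to invoke the Fourier transform method of \cite{article_Carr99}, specialised to the affine payoff. The starting point is the expression already obtained above,
\begin{equation}
\mathbb{C}(T_k,K)=B(0,T_{k+1})\,\mathbb{E}^{\mathbb{P}_{T_{k+1}}}\big[(e^Y-\mathcal{K})^+\big],\qquad \mathcal{K}=e^{c},\ \ c=\log(1+\Delta T K),
\end{equation}
with $Y=A_{T_N-T_k}(u_k,u_{k+1})+\tr\big[B_{T_N-T_k}(u_k,u_{k+1})\Sigma_{T_k}\big]$ and $A,B$ as in \eqref{exponentiallyAB}. Viewing the undiscounted caplet value as a function of the log-strike, $C(c):=\mathbb{E}^{\mathbb{P}_{T_{k+1}}}[(e^Y-e^{c})^+]$, one observes that $C(c)\to\mathbb{E}^{\mathbb{P}_{T_{k+1}}}[e^Y]=B(0,T_k)/B(0,T_{k+1})>0$ as $c\to-\infty$, so $C$ is not integrable on $\mathbb{R}$; this is repaired by the damping factor $e^{\alpha c}$, $\alpha>0$, which is licit as soon as $\mathbb{E}^{\mathbb{P}_{T_{k+1}}}[e^{(1+\alpha)Y}]<\infty$.

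Granting this moment condition, I would compute the Fourier transform of the dampened value. Writing $C(c)=\int_{c}^{\infty}(e^{y}-e^{c})\,\mathbb{P}_{T_{k+1}}(Y\in dy)$ and applying Fubini,
\begin{align}
\int_{-\infty}^{\infty}e^{(\alpha+iv)c}\,C(c)\,dc
&=\int_{-\infty}^{\infty}\Big(\int_{-\infty}^{y}(e^{y}-e^{c})\,e^{(\alpha+iv)c}\,dc\Big)\,\mathbb{P}_{T_{k+1}}(Y\in dy)\nonumber\\
&=\frac{1}{(\alpha+iv)(1+\alpha+iv)}\int_{-\infty}^{\infty}e^{(1+\alpha+iv)y}\,\mathbb{P}_{T_{k+1}}(Y\in dy)\nonumber\\
&=\frac{\mathbb{E}^{\mathbb{P}_{T_{k+1}}}\!\big[e^{i(v-(\alpha+1)i)Y}\big]}{(\alpha+iv)(1+\alpha+iv)},
\end{align}
the inner integral in $c$ being elementary and the last equality using $i(v-(\alpha+1)i)=1+\alpha+iv$. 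The numerator is the analytic continuation of the characteristic function of $Y$ under $\mathbb{P}_{T_{k+1}}$: since $Y$ is affine in $\Sigma_{T_k}$ and the affine structure is preserved under the forward measure change, $\mathbb{E}^{\mathbb{P}_{T_{k+1}}}[e^{zY}]=e^{zA}\,\mathbb{E}^{\mathbb{P}_{T_{k+1}}}\big[e^{-\tr[(-zB)\Sigma_{T_k}]}\big]$ is an explicit exponential-affine function of $\Sigma_0$ (available in closed form through Proposition \ref{proplinearization} in the Wishart case), so the pricing reduces to a single integral.

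Finally, Fourier inversion of the previous display gives
\begin{equation}
e^{\alpha c}\,C(c)=\frac{1}{2\pi}\int_{-\infty}^{\infty}e^{-ivc}\,\frac{\mathbb{E}^{\mathbb{P}_{T_{k+1}}}\!\big[e^{i(v-(\alpha+1)i)Y}\big]}{(\alpha+iv)(1+\alpha+iv)}\,dv,
\end{equation}
and multiplying by $B(0,T_{k+1})e^{-\alpha c}$ and substituting back $Y=A_{T_N-T_k}(u_k,u_{k+1})+\tr[B_{T_N-T_k}(u_k,u_{k+1})\Sigma_{T_k}]$ yields the claimed formula, the identities for $c$, $A$ and $B$ being those recorded in the statement and in \eqref{exponentiallyAB}. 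The only substantive point — and the reason $\alpha$ cannot be chosen freely — is the verification that $\mathbb{E}^{\mathbb{P}_{T_{k+1}}}[e^{(1+\alpha)Y}]<\infty$, i.e.\ that $-(1+\alpha)B_{T_N-T_k}(u_k,u_{k+1})$ lies in the convergence domain of the affine Laplace transform under the forward measure (the analogue of the set $\mathcal{I}_{T}$); this same bound, together with the decay in $v$ of the affine characteristic function, legitimises the Fubini interchange and the absolute convergence of the inversion integral. I expect this integrability bookkeeping to be the main, and essentially the only, obstacle, the rest being the elementary computation above.
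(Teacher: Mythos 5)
Your argument is correct and is precisely the standard Carr--Madan derivation (damping by $e^{\alpha c}$, Fubini, the elementary inner integral giving $\bigl((\alpha+iv)(1+\alpha+iv)\bigr)^{-1}$, and Fourier inversion) that the paper invokes and deliberately omits as standard, and you rightly identify the moment condition $\mathbb{E}^{\mathbb{P}_{T_{k+1}}}[e^{(1+\alpha)Y}]<\infty$ as the only substantive hypothesis behind the choice of $\alpha$. Nothing to add.
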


In other words, pricing a Cap involves the computation of the moment generating function of e.g. the Wishart process, which can be efficiently performed through the linearization of the associated Riccati ODEs as explained in Proposition \ref{proplinearization}. The parameter $\alpha>0$ represents the damping factor introduced by \cite{article_Carr99}. We report in the Appendix B the explicit expression of the characteristic function involved in the pricing procedure.

\subsection{Swaptions}
The payoff of a receiver (resp. payer) swaption may be seen as a call (resp. put) on a coupon bond with strike price equal to one. We consider a receiver swaption that starts at $T_{i}$ with maturity $T_{m}$, $(i<m\leq N)$. The time-$T_{i}$ value is given by:
\begin{equation}
\mathbb{S}_{T_{i}}(K,T_{i},T_{m})=\left(\sum_{k=i+1}^{m}{c_{k}B(T_{i},T_{k})}-1 \right)^{+}
\end{equation}
where
\begin{equation}
c_{k} = \left\{
\begin{array}{ll}
\Delta TK & \text{if } i+1\leq k \leq m-1,\\
1+\Delta TK & \text{if } k=m.
\end{array} \right.
\end{equation}
Unfortunately, we face some difficulties if we try to adopt the Fourier technique that we employed to price a caplet. To see this we look at the proof of Proposition 7.2. in \cite{article_KRPT2009}, which requires the computation of the Fourier transform of the payoff\footnote{$B(T_{i},T_{k})=\frac{B(T_{i},T_{k})}{B(T_{i},T_{N})}\frac{B(T_{i},T_{N})}{B(T_{i},T_{i})}=\frac{M^{u_k}_{T_i}}{M^{u_i}_{T_i}}=\exp\left\{A_{T_N-T_i}(u_k,u_{i})+\tr\left[B_{T_N-T_i}(u_k,u_{i})\Sigma_{T_i}\right]\right\}$}:
\begin{align}
\tilde{f}(z)&=\int_{\mathbb{R}^{\frac{d(d+1)}{2}}}{e^{\tr\left[iz\Sigma_{T_i}\right]}\left(\sum_{k=i+1}^{m}{c_{k}e^{A_{T_N-T_i}(u_k,u_{i})+\tr\left[B_{T_N-T_i}(u_k,u_{i})\Sigma_{T_i} \right]}} -1\right)^{+}dvech(\Sigma_{T_i})},
\end{align}
where for a symmetric matrix $A$, $vech(A)$ stands for the vector in $\mathbb{R}^{d(d+1)/2}$ consisting in the columns of the upper-diagonal part of $A$
including the diagonal.
The problem is given by the presence of the positive part in the payoff function. To get rid of it, we should be able to find a value $\tilde{\Sigma}$ such that
\begin{equation}
\sum_{k=i+1}^{m}{c_{k}e^{A_{T_N-T_i}(u_k,u_{i})+\tr\left[B_{T_N-T_i}(u_k,u_{i})\tilde{\Sigma} \right]}}=1, \label{findBound}
\end{equation}
that is we should solve a single equation in $d(d+1)/2$ unknowns (the elements of $\tilde{\Sigma}$), which is highly non trivial when $d>1$.  Thus, pricing swaptions is challenging when we consider multiple factor affine models: this is a well known problem, see e.g. \cite{article_Jam89} and  \cite{article_CollinDufresneGoldstein2002}. \cite{article_KRPT2009} investigate the case $d=1$, that is a Libor model driven by a (univariate) CIR process like in \cite{article_Jam87}. In that case, solving an  equation similar to (\ref{findBound}) is simple and the pricing of a swaption is only slightly more numerically complicated than the pricing of a Cap. As our purpose is to extend their methodology to a process with values in the set of strictly positive definite symmetric matrices we face a numerical difficulty related to the dimension of the state space. In order to solve this difficulty we follow   \cite{article_CollinDufresneGoldstein2002}'s methodology which strongly depends on the affine property of the process used to modelize the rates. As the processes we use have this affine property we can carry out the approximation for the swaption price proposed by these authors. Therefore, we can get around the dimensional difficulties posed by the process.\\

We briefly recall the main results of  \cite{article_CollinDufresneGoldstein2002} to approximate the exercise probabilities for the swaption.
We define the $T_{i}$-price of a coupon bond, for $i<m\leq N$, as follows:
\begin{equation}
CB(T_{i})=\sum_{k=i+1}^{m}{c_{k}B(T_{i},T_{k})}.
\end{equation}

 Let us derive the general form of the pricing formula for a receiver swaption, for $0=T_{0}=t <T_{i}$:
\begin{align}
\mathbb{S}_{0}(K,T_{i},T_{m})&=\mathbb{E}^{\mathbb{Q}}\left[e^{-\int_{0}^{T_{i}}{r_s ds}}\left(CB(T_{i})-1 \right)^{+} \right]\nonumber\\
&=\mathbb{E}^{\mathbb{Q}}\left[e^{-\int_{0}^{T_{i}}{r_s ds}}\left(CB(T_{i})\textbf{1}_{\left(CB(T_{i})>1\right)}-\textbf{1}_{\left(CB(T_{i})>1\right)} \right) \right]\nonumber\\
&=\sum_{k=i+1}^{m}{c_{k}\mathbb{E}^{\mathbb{Q}}\left[e^{-\int_{0}^{T_{k}}{r_s ds}}\textbf{1}_{\left(CB(T_{i})>1\right)}\right]}\nonumber\\
&-\mathbb{E}^{\mathbb{Q}}\left[e^{-\int_{0}^{T_{i}}{r_s ds}}\textbf{1}_{\left(CB(T_{i})>1\right)}\right].\nonumber
\end{align}
We switch to the forward measure as follows:
\begin{align}
\mathbb{S}_{0}(K,T_{i},T_{N})&=\sum_{k=i+1}^{m}{c_{k}B(0,T_{k})\mathbb{E}^{\mathbb{Q}}\left[\frac{e^{-\int_{0}^{T_{k}}{r_s ds}}}{B(0,T_{k})}\textbf{1}_{\left(CB(T_{i})>1\right)}\right]}\nonumber\\
&-B(0,T_{i})\mathbb{E}^{\mathbb{Q}}\left[\frac{e^{-\int_{0}^{T_{i}}{r_s ds}}}{B(0,T_{i})}\textbf{1}_{\left(CB(T_{i})>1\right)}\right]\nonumber\\
&=\sum_{k=i+1}^{m}{c_{k}B(0,T_{k})\mathbb{E}^{\mathbb{P}_{T_{k}}}\left[\textbf{1}_{\left(CB(T_{i})>1\right)}\right]}\nonumber\\
&-B(0,T_{i})\mathbb{E}^{\mathbb{P}_{T_{i}}}\left[\textbf{1}_{\left(CB(T_{i})>1\right)}\right]\nonumber\\
&=\sum_{k=i+1}^{m}{c_{k}B(0,T_{k})\mathbb{P}_{T_{k}}\left[\left(CB(T_{i})>1\right)\right]}\nonumber\\
&-B(0,T_{i})\mathbb{P}_{T_{i}}\left[\left(CB(T_{i})>1\right)\right].\nonumber
\end{align}

The exercise probabilities $\mathbb{P}_{T_{k}}\left[\left(CB(T_{i})>1\right)\right]$ and $\mathbb{P}_{T_{i}}\left[\left(CB(T_{i})>1\right)\right]$ do not admit in general a closed form expression but can be efficiently approximated thanks to an Edgeworth expansion procedure. Intuitively, the moments of the coupon bonds admit a simple closed-form expression in our affine framework, and these moments uniquely identify the cumulants of the distribution. One can expand the characteristic function in terms of the cumulants and compute the exercise probabilities by Fourier inversion.\\

Using the notation of \cite{article_CollinDufresneGoldstein2002} (their formula (5)) for the $q-th$ power of a coupon bond we notice that, for $i<m\leq N$:
\begin{align}
\left( CB(T_{i})\right)^{q}&=\left(c_{i+1}B(T_{i},T_{i+1})+...+c_{m}B(T_{i},T_{m}) \right)^{q}\nonumber\\
&=\sum_{j_{1},...,j_{q}=i+1}^{m}{\left(c_{j_{1}}\cdot...\cdot c_{j_{q}} \right)\times\left(B(T_{i},T_{j_{1}})\cdot...\cdot B(T_{i},T_{j_{q}}) \right)}.
\end{align}

Now in our framework we have (see also formula (7.9) in \cite{article_KRPT2009})
\begin{equation}
B(T_{i},T_{j_l})=\frac{M^{u_{j_l}}_{T_{i}}}{M^{u_{i}}_{T_{i}}}
\end{equation}
for $l=1,...,q$, meaning that we can rewrite the $q-th$ power of the coupon-bond as follows:
\begin{equation}
\left( CB(T_{i})\right)^{q}=\sum_{j_{1},...,j_{q}=i+1}^{m}{\left(c_{j_{1}}\cdot...\cdot c_{j_{q}} \right)\times\left(\frac{M^{u_{j_{1}}}_{T_{i}}}{M^{u_i}_{T_{i}}}\cdot...\cdot \frac{M^{u_{j_{q}}}_{T_{i}}}{M^{u_i}_{T_{i}}} \right)}.
\end{equation}
Recall, from \eqref{martingale}, that we have
\begin{equation}
M^{u_{j_{l}}}_{T_{i}}=\exp \left\{-\phi_{T_{N}-T_{i}}(u_{j_{l}})-\tr\left[ \psi_{T_{N}-T_{i}}(u_{j_{l}})\Sigma_{T_{i}}\right] \right\},
\end{equation}
for $l=1,...,q$ and
\begin{equation}
M^{u_{i}}_{T_{i}}=\exp \left\{-\phi_{T_{N}-T_{i}}(u_{i})-\tr\left[ \psi_{T_{N}-T_{i}}(u_{i})\Sigma_{T_{i}}\right] \right\}.
\end{equation}

In conclusion, the $q-th$ moment under $\mathbb{P}_{T_k}$ has the following expression:
\begin{align}
&\mathbb{E}^{\mathbb{P}_{T_{k}}}\left[CB(T_{i})^q \right]\nonumber\\
&=\sum_{j_{1},...,j_{q}=i+1}^{m}\left(c_{j_{1}}\cdot...\cdot c_{j_{q}} \right)\times{\mathbb{E}^{\mathbb{P}_{T_{k}}}\left[\left(\frac{M^{u_{j_{1}}}_{T_{i}}}{M^{u_i}_{T_{i}}}\cdot...\cdot \frac{M^{u_{j_{q}}}_{T_{i}}}{M^{u_i}_{T_{i}}} \right) \right]}\nonumber\\
&=\sum_{j_{1},...,j_{q}=i+1}^{m}\left(c_{j_{1}}\cdot...\cdot c_{j_{q}} \right)\times\nonumber\\
&\mathbb{E}^{\mathbb{P}_{T_{k}}}\Bigg[\exp\Bigg\{ \sum_{l=1}^{q}{\Big(-\phi_{T_{N}-T_{i}}(u_{j_{l}})-\tr\left[ \psi_{T_{N}-T_{i}}(u_{j_{l}})\Sigma_{T_{i}}\right]\Big)}\Bigg. \nonumber\\
&\Bigg.+q\Big( \phi_{T_{N}-T_{i}}(u_{i})+\tr\left[ \psi_{T_{N}-T_{i}}(u_{i})\Sigma_{T_{i}}\right]\Big)\Bigg\}\Bigg]\nonumber\\
&=\sum_{j_{1},...,j_{q}=i+1}^{m}\left(c_{j_{1}}\cdot...\cdot c_{j_{q}} \right)\times \exp \left\{\left(-\sum_{l=1}^{q}{\phi_{T_{N}-T_{i}}(u_{j_{l}})}\right)+q\phi_{T_{N}-T_{i}}(u_{i}) \right\}\nonumber\\
&\times \mathbb{E}^{\mathbb{P}_{T_{k}}}\left[\exp\left\{\tr\left[\left(\left(-\sum_{l=1}^{q}{\psi_{T_{N}-T_{i}}(u_{j_{l}})}\right)+q\psi_{T_{N}-T_{i}}(u_{i})\right)\Sigma_{T_{i}} \right] \right\} \right],
\end{align}
where the functions $\phi$ and $\psi$ are as usual the solutions of Riccati ODE's of the form \eqref{Riccati_psi}, \eqref{Riccati_phi}. Once the first $m$ moments under the corresponding forward measures are exactly determined, we can estimate the exercise probabilities $\mathbb{P}_{T_{k}}\left[\left(CB(T_{0})>1\right)\right]$ under each forward measure by relying on a cumulant expansion for $\mathbb{P}_{T_{k}}\left[CB(T_{0})\right]$.

\section{The Wishart Libor Model}
The aim of this section is to illustrate a specific choice for the driving process $\Sigma$. As in the general setup, we specify the process under the terminal probability measure $\mathbb{P}_{T_N}$. The example we choose is the Wishart process, which was already presented in section \ref{Wis_disc}:

\begin{equation}
d\Sigma_t=(\Omega\Omega^{\top}+M\Sigma_t+\Sigma_tM^{\top})dt+\sqrt{\Sigma_t}dW_t^{T_N}Q+Q^{\top}dW_t^{{T_N}\top}\sqrt{\Sigma_t}.\label{wishart_ni}
\end{equation}

Here $W_t^{T_N}$ denotes a matrix Brownian motion, i.e. a $d\times d$ matrix of independent Brownian motions under the $\mathbb{P}_{N}$-forward probability measure. In the sequel we will write $W_t$ for notational simplicity.\\ 

In this section we show the impact of the relevant parameters on the implied volatility surface generated by vanilla options for a Libor model driven by a Wishart process. With the aim to investigate some complex movements of the implied volatility surface, we first compute the covariation between the Libor rate and its volatility: this covariation is a crucial quantity allowing for the so called skew effect on the smile, in perfect analogy with the leverage effect for vanilla options in the equity market.
 
\subsection{The skew of vanilla options}
We want to compute the covariation between the Libor rate and its volatility, so we proceed to derive the dynamics of the Libor rate in the Wishart  model. This may be done along the following steps: using the shorthand
\begin{align}
B_k:=B_{T_N-t}(u_k,u_{k+1})=-\psi_{T_N-t}(u_k)+\psi_{T_N-t}(u_{k+1}),
\end{align}
recall that we have:
\begin{align}
1+\Delta T L(t,T_k,T_{k+1})=\frac{B(t,T_k)}{B(t,T_{k+1})}=e^{A_k+\tr\left[B_k \Sigma_t\right]}.
\end{align}
In differential form, after dividing both sides by $L(t,T_k,T_{k+1})$ we have
\begin{align}
&\frac{dL(t,T_k,T_{k+1})}{L(t,T_k,T_{k+1})}\nonumber\\
&=\frac{1+\Delta T L(t,T_k,T_{k+1})}{L(t,T_k,T_{k+1})}\left([...]dt+\tr\left[B_k d\Sigma_t\right]\right).
\label{dynLibor}\end{align}

To preserve analytical tractability, we freeze the coefficients and approximate as follows:
\begin{align}
\frac{1+\Delta T L(t,T_k,T_{k+1})}{L(t,T_k,T_{k+1})}\approx \frac{1+\Delta T L(0,T_k,T_{k+1})}{L(0,T_k,T_{k+1})}=:C.
\label{frozen}\end{align}

\begin{proposition}
Under the assumption of frozen coefficients (\ref{frozen}), 
the conditional infinitesimal correlation between the Libor rate and its volatility cannot be negative and is given by 
\begin{align}
&d\left\langle  L(t,T_k,T_{k+1}), vol(L(t,T_k,T_{k+1}))\right\rangle\nonumber\\
&=\frac{\tr\left[B_kQ^{\top}QB_kQ^{\top}QB_k\Sigma\right]dt}{\sqrt{\tr\left[QB_k\Sigma B_k^{\top}Q^{\top}\right]}\sqrt{\tr\left[\Sigma B_kQ^{\top}QB_kQ^{\top}QB_kQ^{\top}QB_k \right]}}.
\end{align}
\label{propskew}\end{proposition}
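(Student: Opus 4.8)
The plan is to compute the two stochastic differentials $dL$ and $d\,vol(L)$ explicitly under the frozen-coefficient approximation, extract their quadratic covariation and their individual quadratic variations, and then assemble the ratio. First I would take \eqref{dynLibor}, substitute the Wishart SDE \eqref{wishart_ni} for $d\Sigma_t$, and keep only the martingale part, since covariations ignore drift. Using $\sqrt{\Sigma_t}\,dW_tQ + Q^\top dW_t^\top\sqrt{\Sigma_t}$ for the diffusion of $\Sigma$, the martingale part of $\tr[B_k\,d\Sigma_t]$ is $\tr[B_k(\sqrt{\Sigma_t}\,dW_tQ + Q^\top dW_t^\top\sqrt{\Sigma_t})] = 2\,\tr[Q B_k \sqrt{\Sigma_t}\,dW_t]$ by cyclicity of the trace and symmetry of $B_k$ and $\Sigma_t$. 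Hence, writing $C$ for the frozen factor in \eqref{frozen}, the diffusion part of $dL(t,T_k,T_{k+1})$ is $2C\,L\cdot\tr[QB_k\sqrt{\Sigma_t}\,dW_t]$, and a standard matrix-Brownian-motion computation ($d\langle \tr[AdW_t],\tr[A'dW_t]\rangle = \tr[A(A')^\top]dt$) gives
\begin{align}
d\langle L\rangle_t = 4C^2 L^2\,\tr\!\left[Q B_k \Sigma_t B_k^\top Q^\top\right]dt,
\end{align}
so the volatility of $L$ is $vol(L) = 2C L\sqrt{\tr[QB_k\Sigma_t B_k^\top Q^\top]}$.

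Next I would differentiate $vol(L)$. The key observation is that, since we are after an \emph{infinitesimal} correlation (a ratio of covariations at a fixed instant), only the martingale part of $d\,vol(L)$ matters, and moreover the factor $2CL$ and the square-root nonlinearity will cancel in the final normalized ratio — so effectively I only need the martingale part of $d\,\tr[QB_k\Sigma_t B_k^\top Q^\top]$, i.e. the differential of the ``instantaneous variance'' $v_t := \tr[Q B_k \Sigma_t B_k^\top Q^\top] = \tr[B_k^\top Q^\top Q B_k \Sigma_t]$. Its martingale part is, again by the trace manipulation above, $2\,\tr[Q B_k^\top Q^\top Q B_k \sqrt{\Sigma_t}\,dW_t] = 2\,\tr[Q B_k Q^\top Q B_k \sqrt{\Sigma_t}\,dW_t]$ using symmetry of $B_k$. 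Then
\begin{align}
d\langle L, v\rangle_t &= 4C L\,\tr\!\left[Q B_k \Sigma_t B_k Q^\top Q B_k Q^\top\right]dt = 4CL\,\tr\!\left[B_k Q^\top Q B_k Q^\top Q B_k \Sigma_t\right]dt,\\
d\langle v\rangle_t &= 4\,\tr\!\left[\Sigma_t B_k Q^\top Q B_k Q^\top Q B_k Q^\top Q B_k\right]dt,
\end{align}
where I have repeatedly used cyclicity and $\sqrt{\Sigma_t}\sqrt{\Sigma_t}=\Sigma_t$. Since $vol(L)$ is an increasing function of $v_t$ times the positive factor $2CL$, we have $d\langle L,vol(L)\rangle_t$ proportional to $d\langle L,v\rangle_t$ and the normalization $\sqrt{d\langle L\rangle_t}\sqrt{d\langle vol(L)\rangle_t}$ is proportional to $\sqrt{d\langle L\rangle_t}\sqrt{d\langle v\rangle_t}$ with the \emph{same} proportionality constant (the chain-rule derivative $\tfrac{1}{2\sqrt{v_t}}\cdot 2CL$ appears once in the numerator and once in the denominator, as does $2CL$ from $d\langle L\rangle_t$); dividing, everything but the stated trace expressions cancels, yielding
\begin{align}
\frac{d\langle L(t,T_k,T_{k+1}), vol(L(t,T_k,T_{k+1}))\rangle}{\sqrt{d\langle L\rangle}\sqrt{d\langle vol(L)\rangle}} = \frac{\tr\!\left[B_k Q^\top Q B_k Q^\top Q B_k \Sigma\right]dt}{\sqrt{\tr\!\left[Q B_k \Sigma B_k^\top Q^\top\right]}\sqrt{\tr\!\left[\Sigma B_k Q^\top Q B_k Q^\top Q B_k Q^\top Q B_k\right]}}.
\end{align}

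Finally, for the sign claim I would argue that the numerator $\tr[B_k Q^\top Q B_k Q^\top Q B_k \Sigma]$ is nonnegative. Writing $R := Q^\top Q \in S_d^{++}$ and noting $B_k$ is symmetric, the numerator is $\tr[(B_k R B_k)\,R^{1/2}(R^{-1/2}B_k\Sigma)]$... more cleanly: set $X := R^{1/2}B_k R^{1/2}$, which is symmetric, and $\Sigma = R^{1/2}\tilde\Sigma R^{1/2}$ is similar in structure; after a change of variables the numerator becomes $\tr[X (\,R^{1/2}B_k\Sigma B_k R^{1/2}\,)]$ where the second factor is of the form $Y^\top (\text{p.s.d.}) Y \succeq 0$... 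The honest statement of the obstacle: the one genuinely non-routine point is establishing $\tr[B_k Q^\top Q B_k Q^\top Q B_k \Sigma]\geq 0$. The clean way is to observe that $B_k = -\psi_{T_N-t}(u_k)+\psi_{T_N-t}(u_{k+1})$ and, by the ordering of the $u_k$ from Proposition~\ref{propgamma} together with monotonicity/sign properties of the flow $\psi$ on $S_d^{--}$ established via the Riccati equation \eqref{Riccati_psi}, the matrix $B_k$ is itself sign-definite (negative semidefinite). Granting $B_k \preceq 0$, write $B_k = -P^\top P$; then $\tr[B_k Q^\top Q B_k Q^\top Q B_k\Sigma] = -\tr[P^\top P R B_k R B_k\Sigma]$ and, using $B_k R B_k = P^\top P R P^\top P \succeq 0$ (a congruence of the p.s.d. matrix $RPR... $) one reduces to $-\tr[(\text{n.s.d.})\cdot(\text{p.s.d.})]\geq 0$ via the fact that $\tr[AS]\geq 0$ when $A\succeq 0$, $S\succeq 0$. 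I would carry out this last matrix-positivity bookkeeping carefully, as it is the only step where the structural hypotheses on $Q$, $M$ (negative definite) and the ordering of the $u_k$ actually enter; the rest is mechanical Itô calculus with the trace.
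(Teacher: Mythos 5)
Your It\^o computation is essentially the paper's own proof: both extract the martingale part $2\tr[QB_k\sqrt{\Sigma_t}\,dW_t]$ of $\tr[B_k\,d\Sigma_t]$, identify the instantaneous variance $v_t=\tr[QB_k\Sigma_t B_k^\top Q^\top]$, compute the martingale part of $dv_t$ as $2\tr[QB_kQ^\top QB_k\sqrt{\Sigma_t}\,dW_t]$, and take the normalized covariation of the two driving noises; your trace identities for $d\langle L,v\rangle$ and $d\langle v\rangle$ agree with the paper's index computation, and your remark that the chain-rule and $2CL$ factors cancel in the normalized ratio is the same (equally informal) identification the paper makes between $vol(L)$ and $v_t$.

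The genuine gap is in the sign claim, exactly the step you flagged as non-routine --- and you have it backwards. Since $u_k\prec u_{k+1}$ means $u_k-u_{k+1}\in S_d^{--}$, and $\psi_\tau(\cdot)$ is order-\emph{preserving} (from $\mathbb{E}[e^{-\tr[u\Sigma_T]}]\ge\mathbb{E}[e^{-\tr[v\Sigma_T]}]$ for $u\preceq v$ one gets $\tr[\psi_\tau(u)\Sigma_0]\le\tr[\psi_\tau(v)\Sigma_0]$ for all $\Sigma_0\in S_d^{++}$, hence $\psi_\tau(u)\preceq\psi_\tau(v)$), the matrix $B_k=-\psi_{T_N-t}(u_k)+\psi_{T_N-t}(u_{k+1})$ is \emph{positive} semidefinite, which is what the paper asserts. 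Your hypothesis $B_k\preceq 0$ cannot be repaired into a proof: the numerator is odd (cubic) in $B_k$, and for $B_k\preceq 0$ one has $B_kQ^\top QB_kQ^\top QB_k=-(-B_k)^{1/2}\bigl((-B_k)^{1/2}Q^\top Q(-B_k)^{1/2}\bigr)^2(-B_k)^{1/2}\preceq 0$, so the numerator would be $\le 0$ and the proposition would be false. Your subsequent factorization is also unsound: after pulling out $B_k=-P^\top P$ the remaining factor $Q^\top QB_kQ^\top QB_k\Sigma$ is not symmetric and is not a product of \emph{two} positive semidefinite matrices, so the rule $\tr[AS]\ge 0$ for $A,S\succeq 0$ does not apply (a product of three p.s.d.\ matrices can have negative trace). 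The correct argument is the mirror image of what you attempted: with $B_k\succeq 0$ write
\begin{equation*}
B_kQ^\top QB_kQ^\top QB_k=B_k^{1/2}\bigl(B_k^{1/2}Q^\top QB_k^{1/2}\bigr)^2B_k^{1/2}\succeq 0,
\end{equation*}
whence $\tr\bigl[B_kQ^\top QB_kQ^\top QB_k\Sigma\bigr]=\tr\bigl[\Sigma^{1/2}B_kQ^\top QB_kQ^\top QB_k\Sigma^{1/2}\bigr]\ge 0$; this is the ``congruence plus self-duality of $S_d^+$'' argument of the paper.
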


\begin{proof} See Appendix. \end{proof}

From the previous formula we realize that the matrix $Q$ is responsible for the shape of the skew. We also have an indirect impact of the mean reversion speed matrix $M$ coming from the term $B_k$ which is the difference of two solutions of the Riccati ODE's \eqref{Riccati_psi} and \eqref{Riccati_phi}. The presence of $\Sigma$ suggests that in the present framework the skew is stochastic. What is more, it can only have positive sign.

\subsection{Numerical illustration with diagonal parameters}
The dynamics above show that the Wishart specification provides a very rich structure of the model. Since we want to get an understanding of the impact of different parameters we will look first at the case where all matrices are diagonal, which basically corresponds to a model driven by a two factor square root process (see e.g. \cite{dafgra11}).

We use the following set of parameters as a benchmark:
\begin{align}
\Sigma_0&=\left(\begin{array}{cc} 3.75&0\\ 0&3.45\end{array}\right),\quad M=\left(\begin{array}{cc} -0.3125*1.0e^{-003}&0\\ 0& -0.5000*1.0e^{-003} \end{array}\right),\nonumber\\
Q&=\left(\begin{array}{cc} 0.034&0\\ 0&0.0420\end{array}\right),\quad \kappa=3.\nonumber
\end{align}
The impact of the Gindikin parameter $\kappa$ is quite easy to understand: the process acts by influencing the overall level of the surface. This is due to the fact that the higher $\kappa$ the lower the probability that the process $\Sigma$ approaches 0. It is interesting to note that there is not only a level impact, but also a curvature effect, as we can see in Figure \ref{fig:1}.
\begin{center}
[Insert Figure \ref{fig:1} here]
\end{center}
Let us now look at the parameters along the diagonals of the matrices $M$ and $Q$. The following claims may be easily checked by looking at the SDE's satisfied by the elements of $\Sigma$ (see also \cite{article_DaFonseca4}). Note that we assumed all eigenvalues of $M$ to lie in the negative real line.
\begin{itemize}
\item For $\left|M_{11}\right|\nearrow(\searrow)$ the surface is shifted downwards (upwards).
\item For $\left|M_{22}\right|\nearrow(\searrow)$ the surface is shifted downwards (upwards).
\end{itemize}
The impact is more evident for OTM caplets with short maturities. This is due to the fact that as the process decreases (in matrix sense) the probability that caplets with short maturities are exercised is lowered more than the analogous probability for longer term caplets.
\begin{center}
[Insert Figure \ref{fig:2}  here]
\end{center}
We then consider the impact of $Q_{11},Q_{22}$. We have the following:
\begin{itemize}
\item As $Q_{11}\nearrow(\searrow)$ the surface is shifted upwards (downwards). In particular if we multiply $Q_{22}$ by a constant $c>1$, then the increment in the short term is higher for OTM than for ITM caplets. If $c<1$ then the decrease is higher for short term OTM caplets, which is intuitive, given the discussion above.
\item The same impacts, with different magnitudes, is observed also for $Q_{22}$.
\end{itemize}
\begin{center}
[Insert Figure \ref{fig:3}  here]
\end{center}

\subsection{The term structure of ATM implied volatilities for caplets}
\subsubsection{Diagonal parameters}
We proceed to consider the term structure of caplet implied volatilities. When the matrix $\Sigma_0$ is diagonal, the impact of the elements of $Q$ is the same: an increase in the absolute value of any element of $Q$ will result in a steeper term structure of ATM caplet volatilities.
\begin{center}
[Insert Figure \ref{fig:4}  here]
\end{center}
Considering a model where $\Sigma_0$ is a full matrix does not influence this result in a significant way.

\subsubsection{More complex adjustments: impact of off-diagonal elements}
To appreciate the flexibility of the Wishart framework, we focus now on the impact of the off-diagonal elements. We introduce off-diagonal elements in $M$ and $Q$ and look at the relative change in the short term smile (4 months) and the long term smile (32 months).
We introduce a fully populated matrix $\Sigma_0$ and look at the impact of $M_{12}$ and $M_{21}$. Our experiments show that there is a symmetry between the sign of $\Sigma_{0,12}$ (the initial value of $\Sigma_{12}$) and $M_{12}$, $M_{21}$. More precisely, the implied volatility changes are as in Table \ref{tab:ImpliedVolatilityChanges}.	
\begin{table*}[h]
\centering
	\begin{tabular}{ccc}
\multicolumn{3}{c}{  } \\ \hline
	 &$\Sigma_{0,12}>0$&$\Sigma_{0,12}<0$\\  \hline  \hline
	 $M_{12}>0$&Increase&Decrease\\
	 $M_{12}<0$&Decrease&Increase \\
	 &&\\   \hline
   $M_{21}>0$&Increase&Decrease\\
	 $M_{21}<0$&Decrease&Increase \\ \hline
	\end{tabular}
	\caption{Implied volatility changes: relation between $\Sigma_{0,12}$ and $M_{12},M_{21}$.}
	\label{tab:ImpliedVolatilityChanges}
\end{table*}

The reason for this symmetry is to be looked for in the drift part of the dynamics of the single elements of the matrix process $\Sigma$.\\

Next we look at the impact of $Q_{12}$ and $Q_{21}$. To this end we model $Q$ as a symmetric matrix and set $Q_{21}=Q_{12}=\rho\sqrt{Q_{11}Q_{22}}$ for a real parameter $\rho$. Also in this case we recognize two main shapes of the adjustment that we denote by $S_1,S_2$.

\begin{table*}[h]
\centering
	\begin{tabular}{ccc}
\multicolumn{3}{c}{  } \\ \hline
	 &$\Sigma_{0,12}>0$&$\Sigma_{0,12}<0$\\ \hline \hline
	 $\rho>0$&$S_1$&$S_2$\\ \hline
	 $\rho<0$&$S_2$&$S_1$\\ \hline
	\end{tabular}
	\caption{Implied volatility changes: relation between $\Sigma_{0,12}$ and $\rho$.}
	\label{tab:ImpliedVolatilityChanges_rho}
\end{table*}

We now proceed to perform other numerical tests which will show that our modeling framework has a certain degree of flexibility. For these tests we set:
\begin{align}
M&=\left(\begin{array}{cc} -0.3125*1.0e^{-003} &0\\ 0& -0.5000*1.0e^{-003} \end{array}\right),\nonumber\\
Q&=\left(\begin{array}{cc} 0.02&\rho \sqrt{Q_{11}Q_{22}}\\ \rho \sqrt{Q_{11}Q_{22}}&0.0420\end{array}\right),\quad \kappa=3,\nonumber
\end{align}
so basically $M$ is parametrized  as before but $Q$ is symmetric and equiped with a parameter $\rho$ which summarizes the information on the off-diagonal elements. We require $\Sigma_0=\Sigma_\infty$, where $\Sigma_\infty$ is given by the solution of the Lyapunov equation \eqref{Lyap}. After that we perturbate $\Sigma_0$ in order to include off-diagonal elements and set $\Sigma_{0,12}=\Sigma_{0,21}=2$.
We have a good degree of control on the term structure of ATM implied volatilities. In particular, we may have larger percentage shifts in the long-term w.r.t. the short-term ATM implied volatility, or, for $\rho=-0.6$ we may even reproduce a situation where the short term ATM implied volatility increases whereas the long-term ATM implied volatility decreases.
\begin{center}
[Insert Figure \ref{fig:5}  here]
\end{center}

If we adopt the same kind of parametrization for the matrix $M$ by introducing a second parameter $\rho_2$, then we have further flexibility because we can impose many different combinations of $\rho$ and $\rho_2$. For example, Figure \ref{fig:9} shows that we are able  to isolate an effect on the term structure of ATM implied volatility: in fact we have a moderate change for ITM caplets while OTM caplets are practically unchanged, but the shape of the term structure of ATM implied volatility is modified in a significant way.
\begin{center}
[Insert Figure \ref{fig:9}  here]
\end{center}

Finally, just for illustrative purposes we report a prototypical Caplet volatility surface generated by the model.
\begin{center}
[Insert Figure \ref{fig:7}  here]
\end{center}
As far as Swaptions are concerned an example of ATM implied volatility surface for different expiries and underlying swap lengths is given below.
\begin{center}
[Insert Figure \ref{fig:8}  here]
\end{center}

\section{The Pure Jump Libor Model}
Finally, in this section, we would like to provide a second example for the driving process $\Sigma$, so as to let the reader appreciate the degree of generality of this framework.
As in the general setup, we specify the process under the terminal probability measure $\mathbb{P}_{T_N}$. The example we choose is a matrix compound Poisson process, which was already presented in section \ref{pure_jump}:
\begin{align}
d\Sigma_t=M\Sigma_t+\Sigma_tM^{\top}+dL_t^{\mathbb{P}_{T_N}}.
\end{align}
All assumptions presented in section \ref{pure_jump} are in order. More specifically, we assume that $L_t^{\mathbb{P}_{T_N}}$ is a compound Poisson process with constant intensity $\lambda$ and jump distribution taking values in $S_d^{++}$. As a specific example of jump distribution we choose the standard Wishart distribution. By recalling the results in section \ref{pure_jump} we have that the solution for the function $\psi_\tau(u)$ is
\begin{align}
\psi_\tau(u)=e^{M^{\top}\tau}ue^{M\tau},
\end{align}
whereas for $\phi_\tau(u)$ we have
\begin{align}
\phi_\tau(u)=-\lambda\int_{0}^{\tau}{\det\left(I_d+2e^{M^{\top}s}ue^{Ms}\mathcal{Q}\right)^{-\frac{n}{2}}ds}+\lambda \tau.
\end{align}
In concrete pricing applications, the computation of the solution for $\phi_\tau(u)$ implies a numerical integration with respect to the time dimension. This numerical integration has an impact on the performance of the model which turns out to be slower than the Wishart Libor model. For illustrative purposes, we report an example for an implied volatility surface for caplets generated by the compound Poisson Libor model with central Wishart distributed jumps.
The mean reversion matrix $M$ and the jump intensity $\lambda$ are given by:
\begin{align}
M&=\left(\begin{array}{cc} -0.0550&0\\ 0& -0.1760\end{array}\right),\nonumber\\
\lambda&=0.1\nonumber.
\end{align}
As far as the jump size distribution is concerned, the parameters are the following:
\begin{align}
\mathcal{Q}&=\left(\begin{array}{cc} 0.27&0\\ 0& 0.05\end{array}\right),\nonumber\\
n&=3.1, \nonumber
\end{align}
and the initial state of the process is
\begin{align*}
\Sigma_0&=\left(\begin{array}{cc} 1.875&0.6\\ 0.6& 1.275\end{array}\right).
\end{align*}

\begin{center}
[Insert Figure \ref{fig:10}  here]
\end{center}

\section{Conclusions}
In this paper we presented an extension of the approach of \cite{article_KRPT2009} to the more general setting of affine processes on positive definite matrices. We showed that their methodology may be adapted to this state space in a straightforward way. What is more, it is possible to efficiently price European swaptions in this multi-factor setting by means of a cumulant expansion due to \cite{article_CollinDufresneGoldstein2002}. In doing so we are in front of a setting which is potentially able to capture correlation effects which can not be described by a single-factor framework. We provided numerical examples for the Wishart Libor model, where the introduction of off-diagonal elements gives rise to new possibilities in the control of the shape of the implied volatility surface.\\

Our contribution may be seen as a starting point for a description of market models in this state space, in consequence we believe that there are many possible directions for future research. An example is given by the problem of calibrating this family of models to real market data. As the structure of the products in the fixed-income market suggests, even in the plain vanilla case, we expect the objective function that should be minimized in the calibration procedure to be quite involved. Yet, some calibration results were obtained on equity derivatives in \cite{dafgra11} for Wishart based models so a calibration using interest rates derivatives might be feasible. Certainly, it will be a delicate issue and may constitute an interesting contribution by its own. Once the model is calibrated on vanillas, one could then further investigate the performance of the model on more exotic structures, like e.g. Bermudan swaptions and barrier options. Theses issues are left for future work.

\clearpage

\section*{Appendix A: proofs}

\subsection*{Proof of Theorem \ref{TeoremaMartingale}}
For all $ u \in \mathcal{I}_{T}$ we have
\begin{equation}
\mathbb{E}\left[ M^{u}_{T}\right]=\mathbb{E}\left[e^{-\tr\left[ u\Sigma_{T}\right]} \right]<\infty,\nonumber
\end{equation}
and by the affine property we obtain
\begin{align}
\mathbb{E}\left[ M^{u}_{T}|\mathcal{F}_{t}\right]&=\mathbb{E}\left[ \exp\left\{-\phi_{T-T}(u)-\tr\left[\psi_{T-T}(u)\Sigma_{T} \right]\right\}|\mathcal{F}_{t}\right]\nonumber\\
&=\mathbb{E}\left[\exp\left\{-\tr\left[u\Sigma_{T} \right] \right\} |\mathcal{F}_{t}\right]\nonumber\\
&=\exp\left\{ -\phi_{T-t}(u)-\tr\left[ \psi_{T-t}(u)\Sigma_{t}\right]\right\}=M^{u}_{t},\nonumber
\end{align}
hence the process is a martingale. Now we show that $M_{t}^{u}>1$. Recall that by assumption $u\in \mathcal{I}_{T}\cap S_{d}^{--}$ and 
\begin{equation}
M^{u}_{t}=\mathbb{E}\left[\exp\left\{-\tr\left[u\Sigma_{T} \right] \right\} |\mathcal{F}_{t}\right],\nonumber\\
\end{equation}
so that if $-\tr\left[u\Sigma_{T} \right]>0$ a.s. then we are done.
We proceed as in \cite{article_gou02} and apply the singular value decomposition to the negative definite matrix $u$, i.e. $u$ may be written as:
\begin{equation}
u=\sum_{i=1}^{n}{\lambda_{i}u_{i}u_{i}^{\top}}\nonumber
\end{equation}
where $\lambda_{i}$ are the eigenvalues of $u$ and $u_i$ are the eigenvectors. By assumption $\Sigma_{T}$ takes values in $S_{d}^{++}$, hence
\begin{align}
-\tr\left[u\Sigma_{T} \right]&= -\tr\left[ \sum_{i=1}^{n}{\lambda_{i}u_{i}u_{i}^{\top}} \Sigma_{T}\right] \nonumber\\
&=-\sum_{i=1}^{n}{\lambda_{i} \tr\left[u_{i}u_{i}^{\top}\Sigma_{T}\right]} \nonumber\\
&=-\sum_{i=1}^{n}{\lambda_{i}u_{i}^{\top}\Sigma_{T}u_{i}}>0 
\end{align}
as we wanted.\endproof

\subsection*{Proof of Proposition \ref{propgamma}}
We follow closely the proof in \cite{article_KRPT2009}. By assumption, initial Libor rates are strictly positive, then
\begin{equation}
\frac{B(0,T_{1})}{B(0,T_{N})} > \frac{B(0,T_{2})}{B(0,T_{N})} > ... > \frac{B(0,T_{N})}{B(0,T_{N})}=1.
\end{equation}
Recall that we have
\begin{equation}
\mathbb{E}\left[ e^{-\tr\left[u_1\Sigma_{T}\right]}\right]=M_{0}^{u_{1}}=\exp\left\{ -\phi_{T}(u_{1})-\tr\left[\psi_{T}(u_{1})\Sigma_{0} \right]\right\}=\frac{B(0,T_{1})}{B(0,T_{N})}.
\end{equation}
By the definition of $\gamma_{\Sigma}$ in (\ref{gammasigma}), we have that if $\gamma_{\Sigma}=\infty$ then we are done, else we can claim that there exists an $\epsilon >0$ such that $\gamma_{\Sigma}-\epsilon > \frac{B(0,T_{1})}{B(0,T_{N})}$. Then we can find a matrix $ \tilde{u}$ s.t.
\begin{equation}
\mathbb{E}\left[ e^{-\tr\left[\tilde{u}\Sigma_{T}\right]}\right]>\gamma_{\Sigma}-\epsilon > \frac{B(0,T_{1})}{B(0,T_{N})}.
\end{equation}
In analogy with \cite{article_KRPT2009} we introduce the function
\begin{align}
&f:\left[0,1\right]\rightarrow\mathbb{R}_{\geq 0}\nonumber\\
&\xi \rightarrow \mathbb{E}\left[ e^{-\tr\left[\xi \tilde{u}\Sigma_{T}\right]}\right]
\end{align}
and we want to show that $f$ is continuous. First, since $\Sigma \in S_{d}^{++}$ and $u \in S_{d}^{--}$ we have that if $u \prec v$ then $-\tr\left[u\Sigma_{T}\right]>-\tr\left[v\Sigma_{T}\right]$, hence by monotone convergence we can conclude that $f$ is increasing. We now introduce an increasing sequence $(a_{n})_{n \in \mathbb{N}} \nearrow 1$ and apply Fatou's lemma to obtain
\begin{equation}
\liminf_{n \rightarrow \infty}\mathbb{E}\left[ e^{-\tr\left[a_{n} \tilde{u}\Sigma_{T}\right]}\right] \geq
\mathbb{E}\left[\liminf_{n \rightarrow \infty} e^{-\tr\left[a_{n} \tilde{u}\Sigma_{T}\right]}\right]=\mathbb{E}\left[ e^{-\tr\left[\tilde{u}\Sigma_{T}\right]}\right],
\nonumber\end{equation}
implying that $f$ is lower semi-continuous. Since $f$ is also increasing we have that $f$ is continuous. Now $f(0)=1$ and $f(1)>\frac{B(0,T_{1})}{B(0,T_{N})}$, hence there exist some numbers $0= \xi_{N} < \xi_{N-1} < ... < \xi_{1}<1$ such that
\begin{equation}
f\left(\xi_{k}\right)=M_{0}^{\xi_{k}\tilde{u}}=\frac{B(0,T_{k})}{B(0,T_{N})}, \hspace{10mm}\forall k \in \left\{ 1,...,N\right\}.
\nonumber\end{equation}
By setting $u_{k}=\xi_{k}\tilde{u}$ (for $k=1,...,N-1$) we obtain a sequence of matrices $u_{k}\prec u_{k+1}, u_{k}-u_{k+1}\in S_d^{--}$ which allows us to fit the initial tenor structure of Libor rates as desired.
Finally, we apply Proposition 1 and Lemma 3.2 (ii) in \cite{article_Cuchiero} in order to obtain the last sentence of the Proposition \ref{propgamma}.
\endproof

\subsection*{Proof of Proposition \ref{propskew}}
In this section we proceed as in the proof of Proposition 4.1 in \cite{article_DaFonseca1}. Recall that $W_t$ is a shorthand for $W_t^{T_N}$. From (\ref{dynLibor}) it follows that
\begin{align}
\frac{dL(t,T_k,T_{k+1})}{L(t,T_k,T_{k+1})}&=C\left((...)dt+2\sqrt{\tr\left[QB_k\Sigma B_k^{\top}Q^{\top}\right]}\left(\frac{\tr\left[QB_k\sqrt{\Sigma}dW_t\right]}{\sqrt{\tr\left[QB_k\Sigma B_k^{\top}Q^{\top}\right]}}\right)\right)\nonumber\\
&:=C\left((...)dt+2\sqrt{\tr\left[QB_k\Sigma B_k^{\top}Q^{\top}\right]}d\tilde{W}_t\right),
\end{align}
where $C$ was defined in \eqref{frozen} and the scalar noise driving the factor process may be derived as follows:
\begin{align}
&d\tr\left[QB_k\Sigma_t B_kQ^{\top}\right]=\left(\tr\left[QB_k\beta Q^{\top}QB_kQ^{\top}\right]+2\tr\left[QB_kM\Sigma_t B_kQ^{\top}\right]\right)dt\nonumber\\
&+2\tr\left[QB_k\sqrt{\Sigma_t}dW_tQB_kQ^{\top}\right]\nonumber\\
&=(...)dt+2\sqrt{\tr\left[\Sigma_t B_kQ^{\top}QB_kQ^{\top}QB_kQ^{\top}QB_k\right]}\frac{\tr\left[QB_kQ^{\top}QB_k\sqrt{\Sigma_t}dW_t\right]}{\sqrt{\tr\left[\Sigma B_kQ^{\top}QB_kQ^{\top}QB_kQ^{\top}QB_k \right]}}\nonumber\\
&:=(...)dt+2\sqrt{\tr\left[\Sigma_t B_kQ^{\top}QB_kQ^{\top}QB_kQ^{\top}QB_k\right]}dZ_t.
\end{align}
The covariation between the noise of the Libor rate and its volatility is then given by
\begin{align}
\left\langle d\tilde{W}_t,dZ_t\right\rangle&=\left\langle \frac{\tr\left[QB_k\sqrt{\Sigma_t}dW_t\right]}{\sqrt{\tr\left[QB_k\Sigma_t B_k^{\top}Q^{\top}\right]}},\frac{\tr\left[QB_kQ^{\top}QB_k\sqrt{\Sigma_t}dW_t\right]}{\sqrt{\tr\left[\Sigma_t B_kQ^{\top}QB_kQ^{\top}QB_kQ^{\top}QB_k \right]}} \right\rangle\nonumber\\
&=\frac{\left(\sum_{p,q,r,s}{Q_{pq}B_{qr}\sqrt{\Sigma}_{rs}dW_{sp}}\right)\left(\sum_{a,b,c,d,e,f,g}{Q_{ab}B_{bc}Q^{\top}_{cd}Q_{de}B_{ef}\sqrt{\Sigma}_{fg}dW_{ga}}\right)}{\sqrt{\tr\left[QB_k\Sigma B_k^{\top}Q^{\top}\right]}\sqrt{\tr\left[\Sigma B_kQ^{\top}QB_kQ^{\top}QB_kQ^{\top}QB_k \right]}}\nonumber\\
&=\frac{\sum_{a,b,c,d,e,f,g,q,r}{B_{fe}Q^{\top}_{ed}B_{cb}Q^{\top}_{ba}Q_{aq}B_{qr}\sqrt{\Sigma}_{rg}\sqrt{\Sigma}_{gf}}dt}{\sqrt{\tr\left[QB_k\Sigma B_k^{\top}Q^{\top}\right]}\sqrt{\tr\left[\Sigma B_kQ^{\top}QB_kQ^{\top}QB_kQ^{\top}QB_k \right]}}\nonumber\\
&=\frac{\tr\left[B_kQ^{\top}QB_kQ^{\top}QB_k\Sigma_t\right]dt}{\sqrt{\tr\left[QB_k\Sigma_t B_k^{\top}Q^{\top}\right]}\sqrt{\tr\left[\Sigma_t B_kQ^{\top}QB_kQ^{\top}QB_kQ^{\top}QB_k \right]}}.\nonumber
\end{align}

Now we turn on the positivity of the skew. With the notation in the proof of Proposition \ref{propgamma}, from $\xi_k>\xi_{k+1}$ we have $u_k\prec u_{k+1}$ and then $B_k\in S_d^+$. In all terms in the numerator and the denominator we recognize congruent transformations of matrices in $S_d^+$ which leave the signs of the eigenvalues unchanged. The self-duality of $S_d^+$ allows us to claim that all traces are positive, hence we are done.
\endproof

\section*{Appendix B: the characteristic function}
With the purpose of pricing caplets, we need to have a more explicit form for the characteristic function appearing in Proposition \ref{price_caplet}. Once we have this expression we can plug in the functions $\phi_{\tau}(u)$ and $\psi_{\tau}(u)$ to obtain a closed form solution. The pricing problem will be then solved via FFT. Recall that we are considering the following expectation:
\begin{align}
\varphi(v)&=\mathbb{E}^{\mathbb{P}_{T_{k+1}}}\left[e^{i\left( v-\left(\alpha +1 \right)i\right)\left(A_{k}+\tr\left[B_{k}\Sigma_{T_{k}}\right]\right)}\right]\nonumber\\
&=e^{i\left( v-\left(\alpha +1 \right)i\right)A_{k}}\mathbb{E}^{\mathbb{P}_{T_{k+1}}}\left[\exp\left\{\tr\left[\underbrace{i\left( v-\left(\alpha +1 \right)i\right) B_{k}}_{u}\Sigma_{T_{k}}\right]\right\}\right]
\end{align}
where
\begin{align}
&A_{k}:=-\phi_{T_{N}-T_{k}}(u_{k})+\phi_{T_{N}-T_{k}}(u_{k+1});\nonumber\\
&B_{k}:=-\psi_{T_{N}-T_{k}}(u_{k})+\psi_{T_{N}-T_{k}}(u_{k+1}).
\end{align}
As we computed the shape of the function $\phi_{\tau}(u)$ and $\psi_{\tau}(u)$ under the $\mathbb{P}_{T_{N}}$-forward measure, we need to switch from the $\mathbb{P}_{T_{k+1}}$ to the $\mathbb{P}_{T_{N}}$-forward measure:
\begin{align}
&e^{i\left( v-\left(\alpha +1 \right)i\right)A_{k}}\mathbb{E}^{\mathbb{P}_{T_{k+1}}}\left[e^{\tr\left[u\Sigma_{T_{k}}\right]}\right]\nonumber\\
&=e^{i\left( v-\left(\alpha +1 \right)i\right)A_{k}}\mathbb{E}^{\mathbb{P}_{T_{N}}}\left[\frac{\partial \mathbb{P}_{T_{k+1}}}{\partial \mathbb{P}_{T_{N}}}e^{\tr\left[u\Sigma_{T_{k}}\right]}\right]\nonumber\\
&=e^{i\left( v-\left(\alpha +1 \right)i\right)A_{k}}\mathbb{E}^{\mathbb{P}_{T_{N}}}\left[\frac{M^{u_{k+1}}_{T_k}}{M^{u_{k+1}}_{0}}e^{\tr\left[u\Sigma_{T_{k}}\right]}\right],\label{8.3}
\end{align}
where the last equation follows from \eqref{change_measure}. Let us focus on the expectation which becomes:
\begin{align}
&\mathbb{E}^{\mathbb{P}_{T_{N}}}\Bigg[\exp\Big\{ -\phi_{T_{N}-T_{k}}(u_{k+1})-\tr\left[\psi_{T_{N}-T_{k}}(u_{k+1})\Sigma_{T_{k}}\right]\Big. \Bigg. \nonumber\\
&\Bigg.\Big.+\phi_{T_{N}}(u_{k+1})+\tr\left[\psi_{T_{N}}(u_{k+1})\Sigma_0 \right]+\tr\left[u\Sigma_{T_{k}} \right]\Big\}\Bigg]\nonumber\\
&=\exp\Big\{ -\phi_{T_{N}-T_{k}}(u_{k+1})+\phi_{T_{N}}(u_{k+1})+\tr\left[\psi_{T_{N}}(u_{k+1})\Sigma_0\right] \Big\}\nonumber\\
&\times\mathbb{E}^{\mathbb{P}_{T_{N}}}\Bigg[e^{\tr\left[\left(-\psi_{T_{N}-T_{k}}(u_{k+1})+u \right)\Sigma_{T_{k}} \right]}\Bigg]\nonumber\\
&=\exp\Big\{ -\phi_{T_{N}-T_{k}}(u_{k+1})+\phi_{T_{N}}(u_{k+1})+\tr\left[\psi_{T_{N}}(u_{k+1})\Sigma_0\right]\Big.\nonumber\\
&\Big. -\phi_{T_{k}}\Big(-\psi_{T_{N}-T_{k}}(u_{k+1})+u\Big) -\tr\left[\psi_{T_{k}}\Big( -\psi_{T_{N}-T_{k}}(u_{k+1})+u\Big)\Sigma_0 \right]\Big\}.
\end{align}
Now, recalling the previous terms in front of the expectation in \eqref{8.3}, we obtain the final expression which is

\begin{align*}
\exp\Bigg\{ &i(v-(\alpha+1)i)\Big(\overbrace{-\phi_{T_{N}-T_{k}}(u_{k})+\phi_{T_{N}-T_{k}}(u_{k+1})}^{A_{k}} \Big)\Bigg.\nonumber\\
&-\left. \phi_{T_{N}-T_{k}}(u_{k+1})+\phi_{T_{N}}(u_{k+1})\right.\nonumber\\
&\left.-\phi_{T_{k}}\Big(-\psi_{T_{N}-T_{k}}(u_{k+1})+\underbrace{i(v-(\alpha+1)i)\Big(\overbrace{-\psi_{T_{N}-T_{k}}(u_{k})+\psi_{T_{N}-T_{k}}(u_{k+1})}^{B_{k}} \Big)}_{u} \Big)\right.\nonumber\\
&\left.+\tr\left[\psi_{T_{N}}(u_{k+1})\Sigma_0\right] \right.\nonumber\\
&\Bigg. -\tr\Bigg[\psi_{T_{k}}\Bigg(-\psi_{T_{N}-T_{k}}(u_{k+1})+\underbrace{i(v-(\alpha+1)i)\overbrace{\Big(-\psi_{T_{N}-T_{k}}(u_{k})+\psi_{T_{N}-T_{k}}(u_{k+1})\Big)}^{B_{k}}}_{u} \Bigg)\Sigma_0 \Bigg] \Bigg\}.
\end{align*}

\section*{Figures}

\begin{figure}[H]
\centering
\includegraphics[scale=0.14]{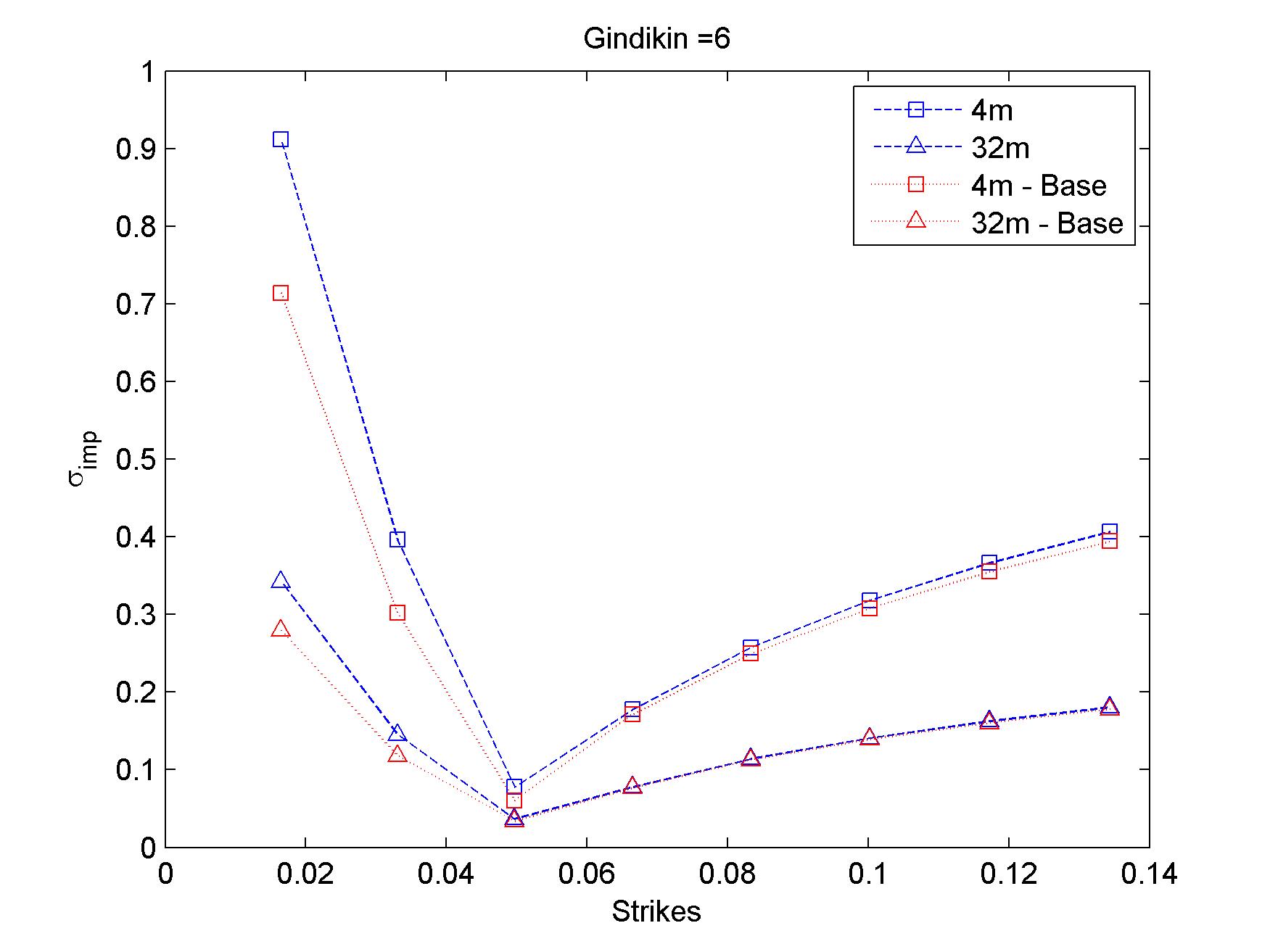}
\caption{Doubling $\kappa$ with respect to the basic case causes an upward shift of the surface. The plot represents the two smiles (4 months and 32 months) for the basic ($\kappa =3$) and the modified case ($\kappa = 6$).}\label{fig:1}
\end{figure}

\begin{figure}[H]
  \centering
  \includegraphics[scale=0.14]{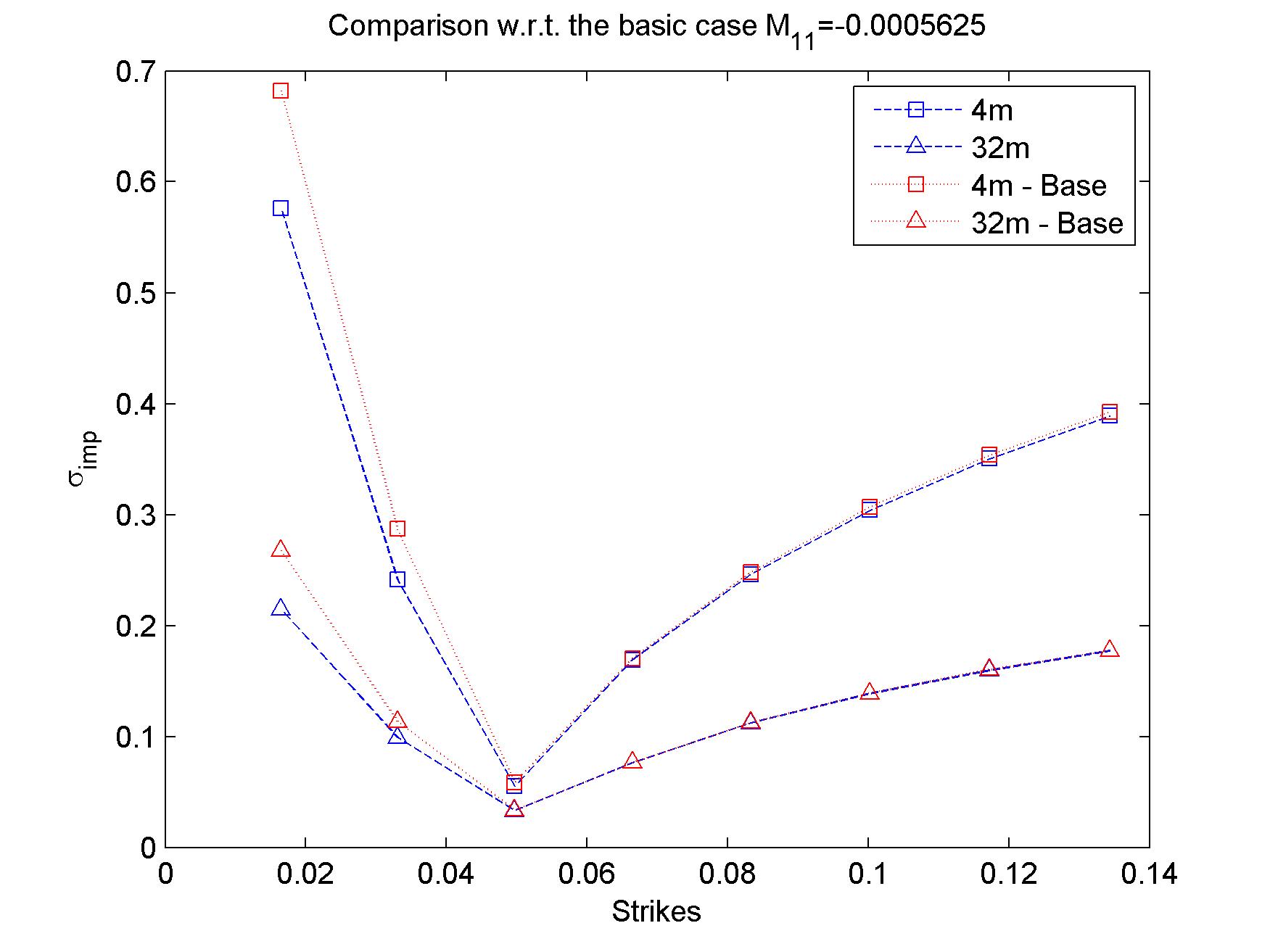}
  \caption{Impact of $M_{11}$. $M_{11}$ is negative and the present image shows the effects on the two smiles (4 months and 32 months) we obtain when we multiply it by a constant $c=1.8$}
  \label{fig:2}
\end{figure}

\begin{figure}[H]
  \centering
  \includegraphics[scale=0.18]{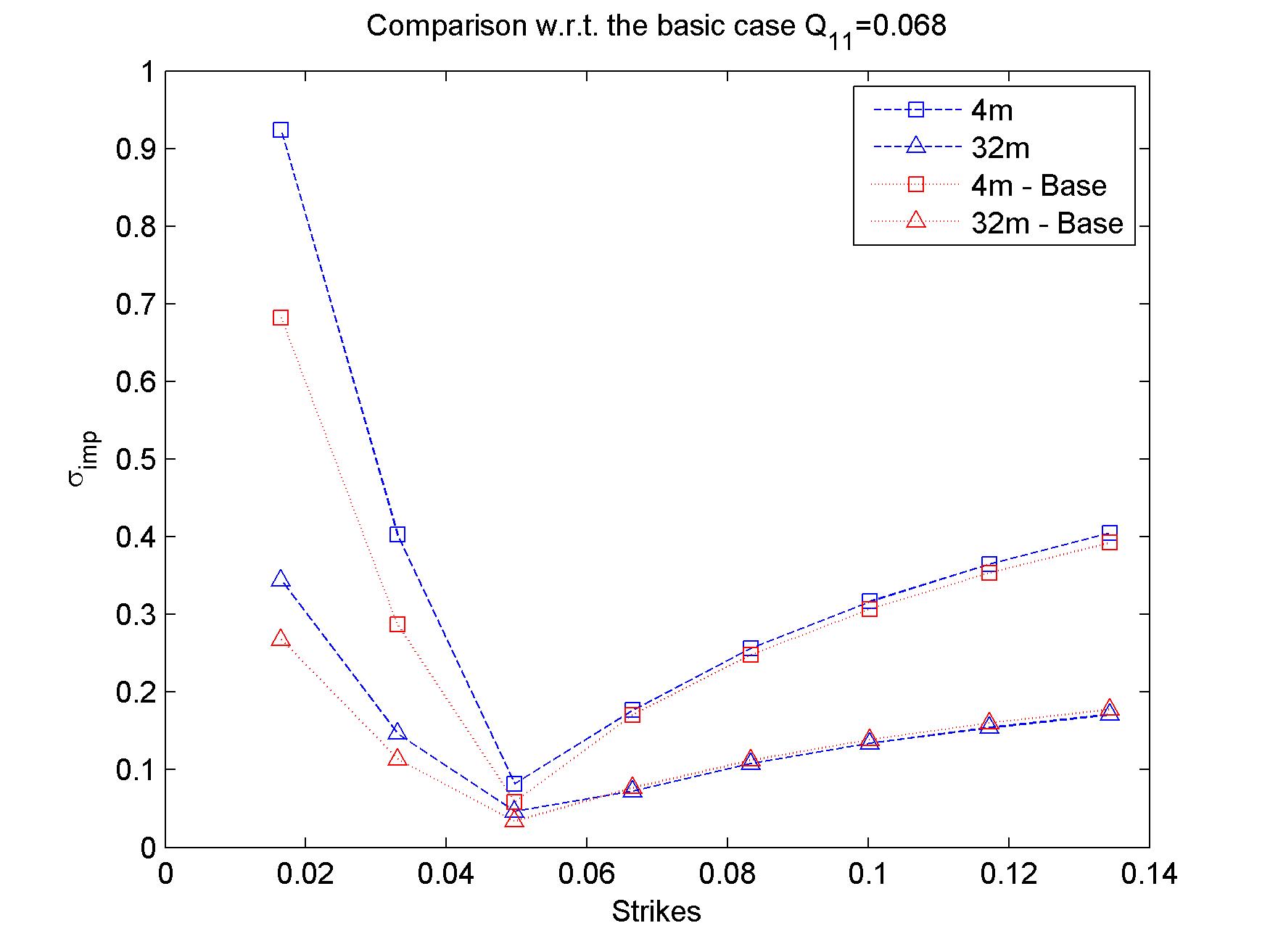}
  \caption{Impact of $Q_{11}$. $Q_{11}$ is positive and the present image shows the effects on the two smiles (4 months and 32 months) we obtain when we multiply it by a constant $c=2$}
  \label{fig:3}
\end{figure}

\begin{figure}[H]
  \centering
  \includegraphics[scale=0.18]{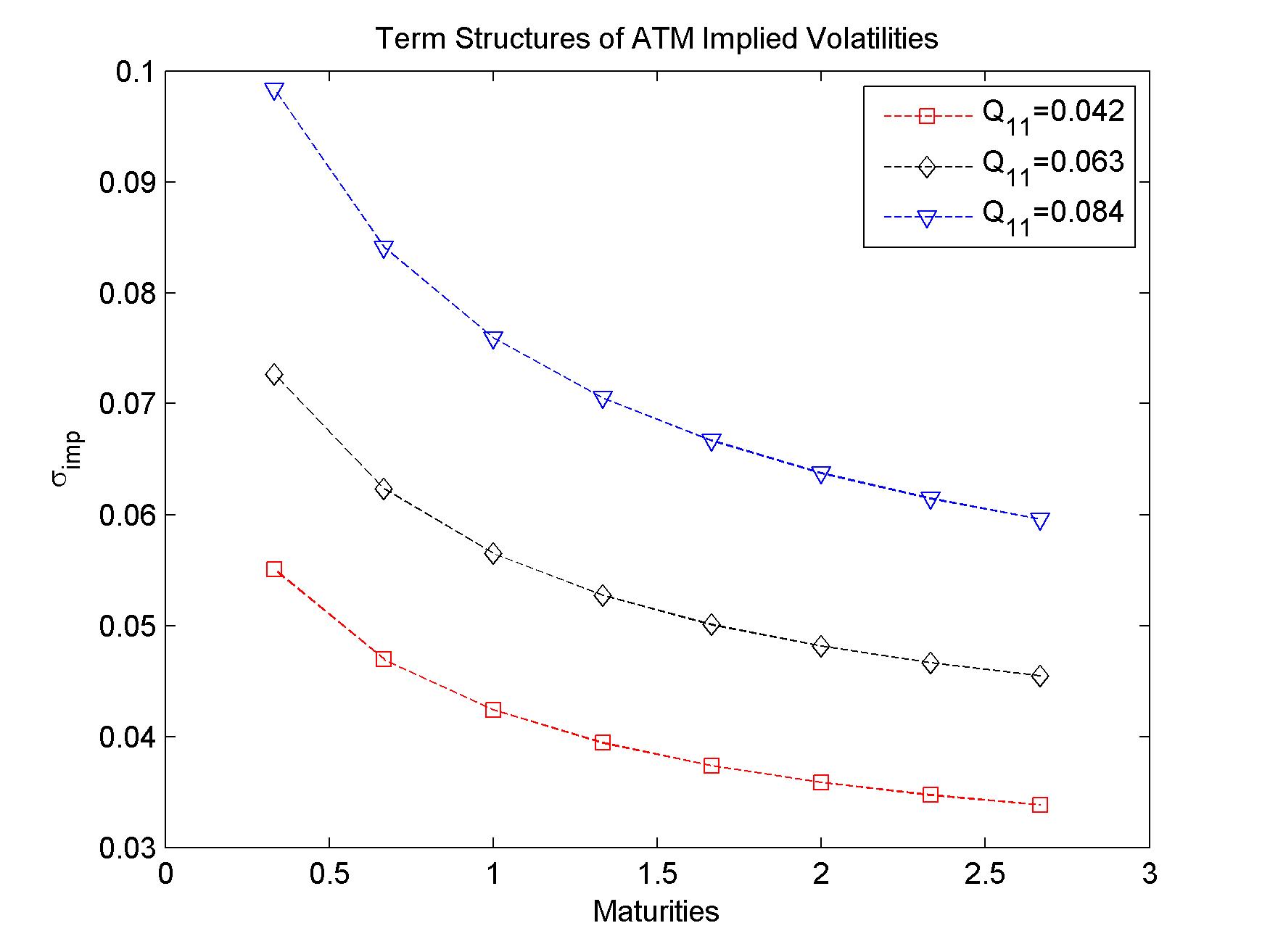}
  \caption{Impact of $Q$ on the term structures of ATM implied volatilities. Here we consider $Q_{11}$ and multiply its value by a constant $c=1,1.5,2$ so as to get the values in the legend.}
  \label{fig:4}
\end{figure}

\begin{figure}[H]
  \centering
  \subfloat{\label{fig:fig21}\includegraphics[width=0.5\textwidth]{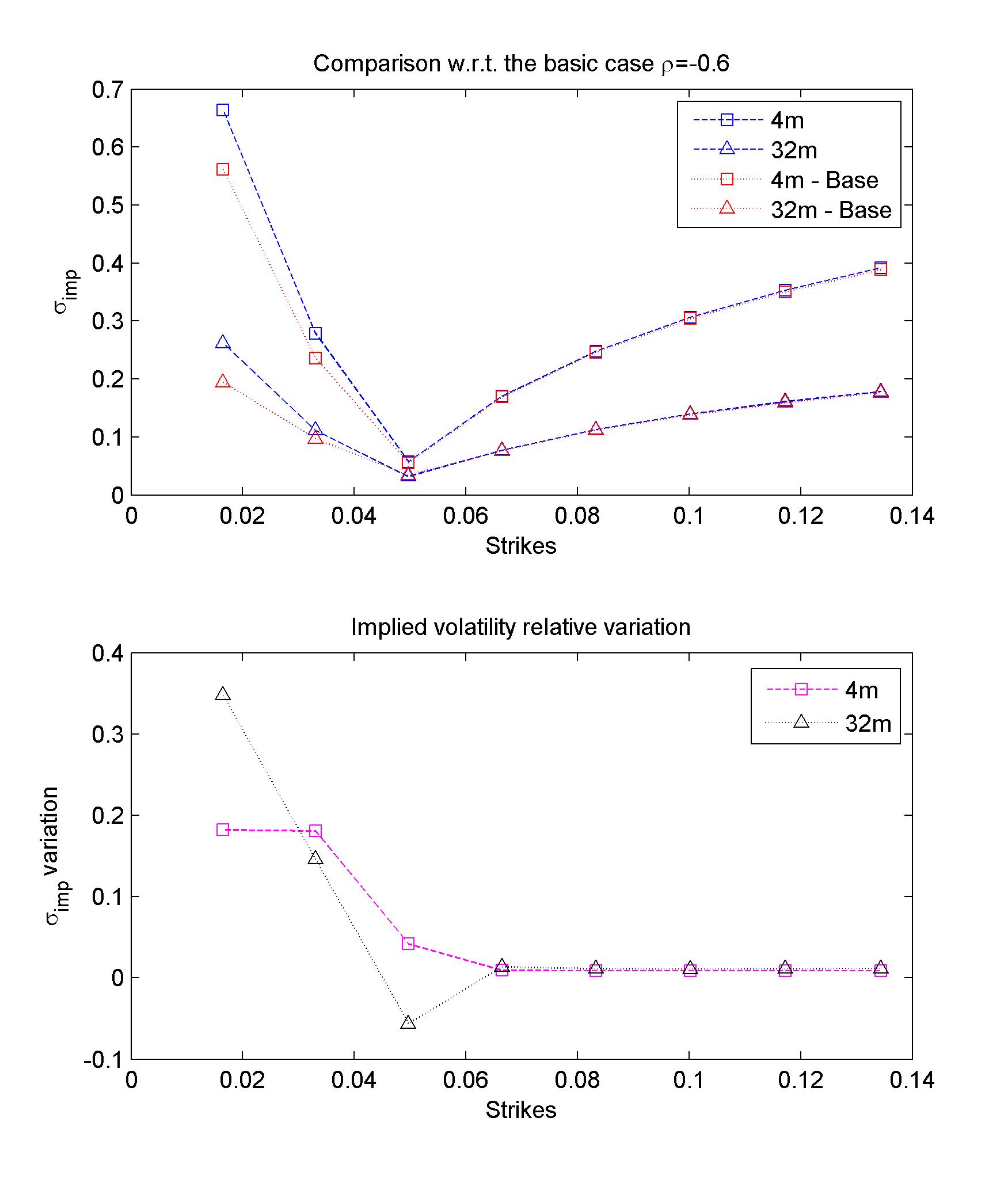}}                
  \subfloat{\label{fig:fig22}\includegraphics[width=0.5\textwidth]{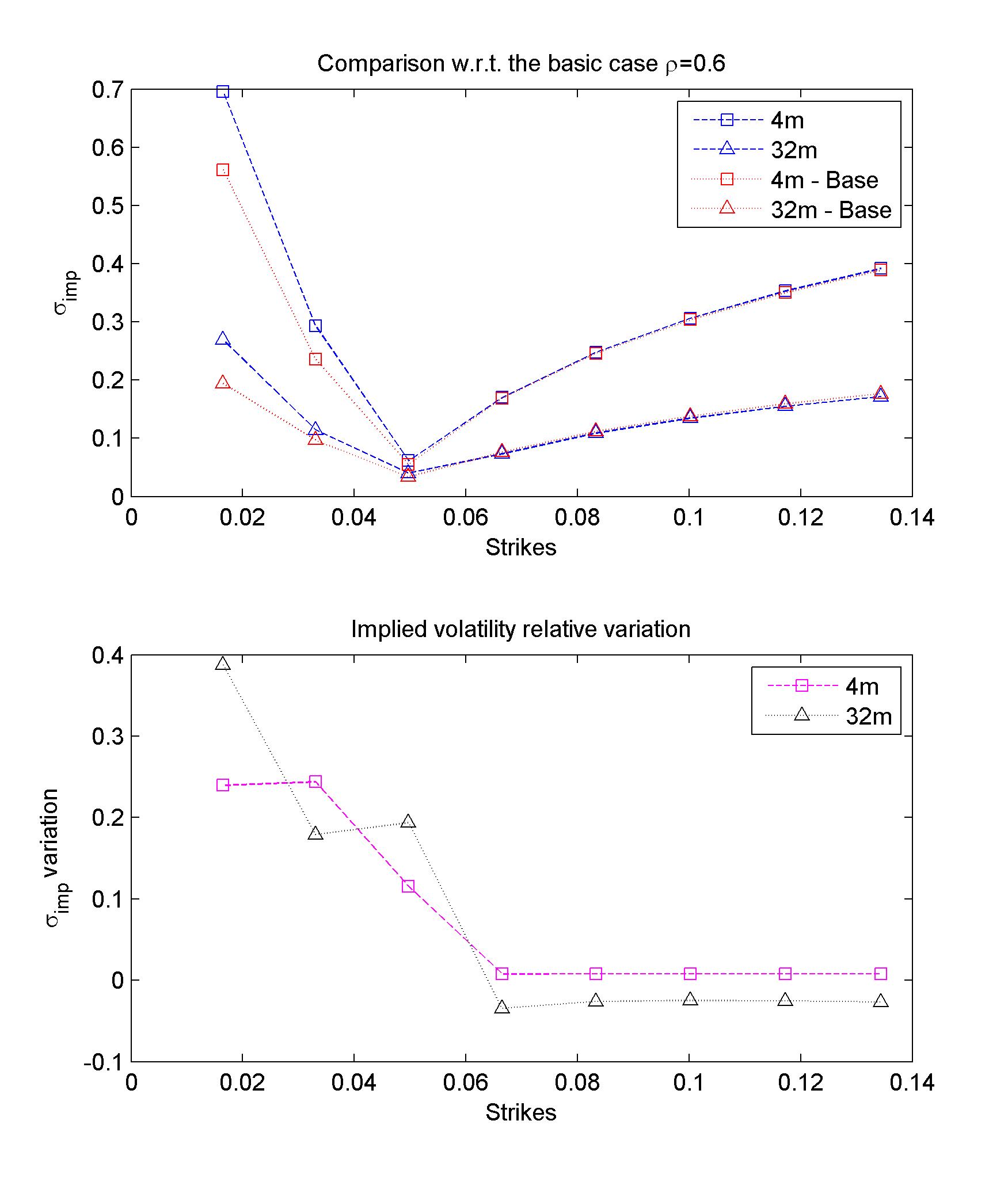}}               
\caption{The images above highlight the flexibility of the Wishart Libor model. We are able to impose different patterns to the term structure of ATM implied volatility. On the top we have the smiles and on the bottom we observe the relative changes of the smiles, i.e. for every point of the smiles we calculate the quantity $\left(\sigma^{imp}_{final}-\sigma^{imp}_{initial}\right)/\sigma^{imp}_{initial}$. Notice in particular the situation on the left side, where we observe around 5\% (ATM) an increase of the short term smile and a decrease on the long term.}
  \label{fig:5}
\end{figure}

\begin{figure}[H]
\centering
  \includegraphics[scale=0.18]{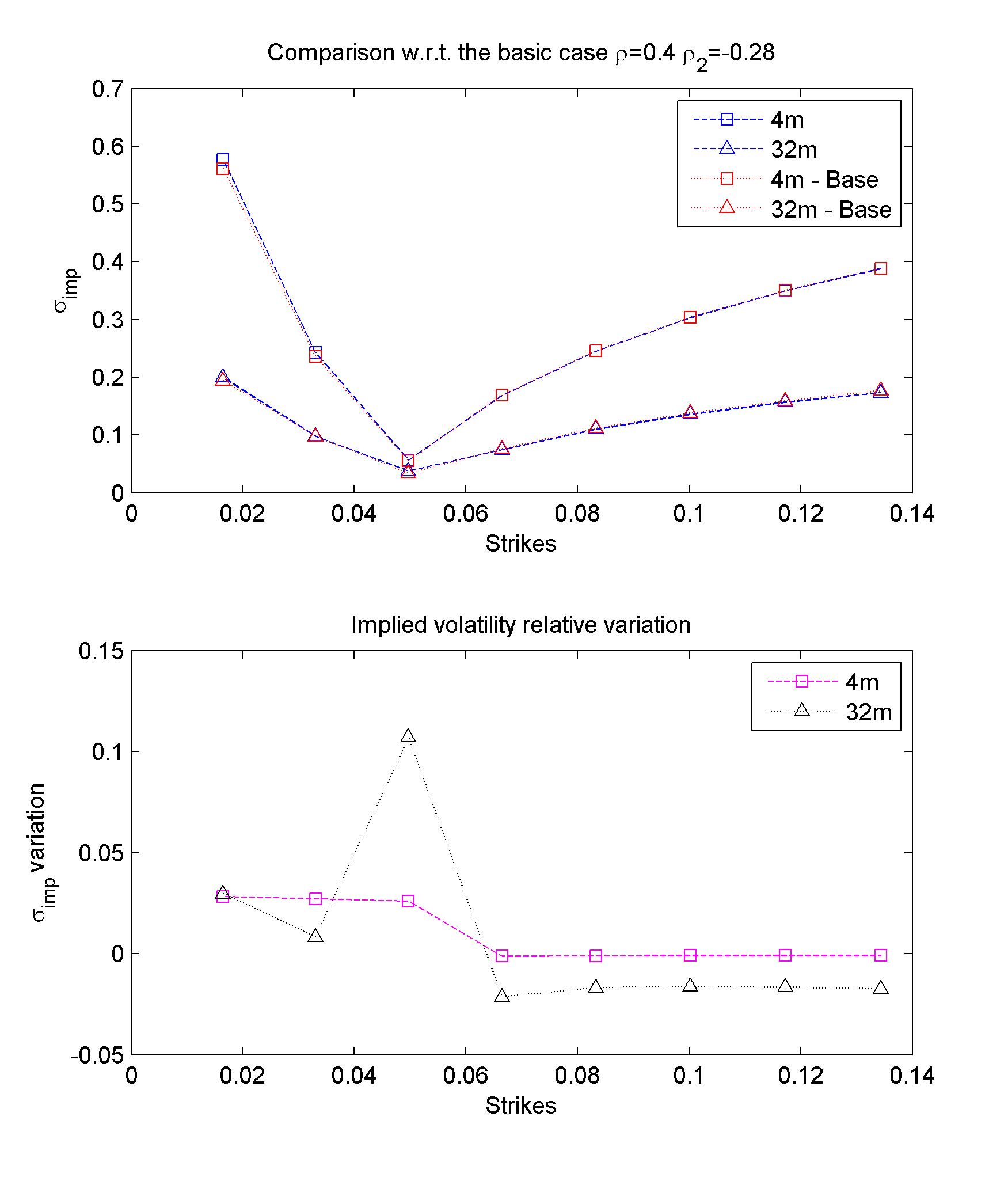}
  \caption{Impact on the implied volatility surface when both $M$ and $Q$ are parametrized as symmetric matrices. Notice the level around 5\%, corresponding to ATM. This shows that if we parametrize both $M$ and $Q$ via $\rho,\rho_2$ we have a flexible setting which is controlled just by two parameters that allow us to perform different combinations. In particular $\rho$ and $\rho_2$ have opposite impacts in the present example ($\rho>0$ whereas $\rho_2<0$), meaning that we have a good degree of control.}
  \label{fig:9}
\end{figure}

\begin{figure}[H]
\centering
  \includegraphics[scale=0.18]{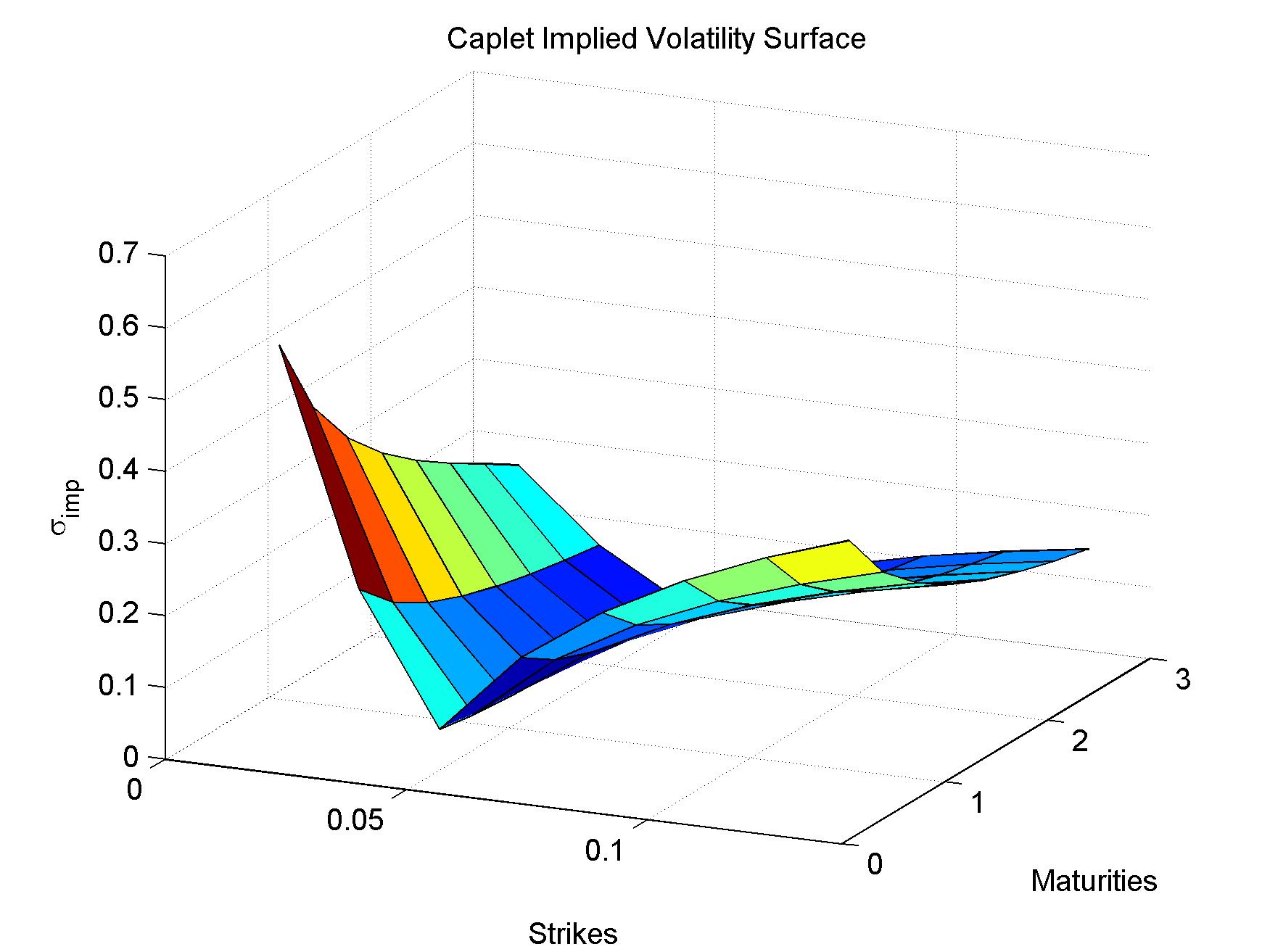}
  \caption{Caplet Implied Volatility Surface generated by the Wishart Libor model}
  \label{fig:7}
\end{figure}

\begin{figure}[H]
\centering
  \includegraphics[scale=0.18]{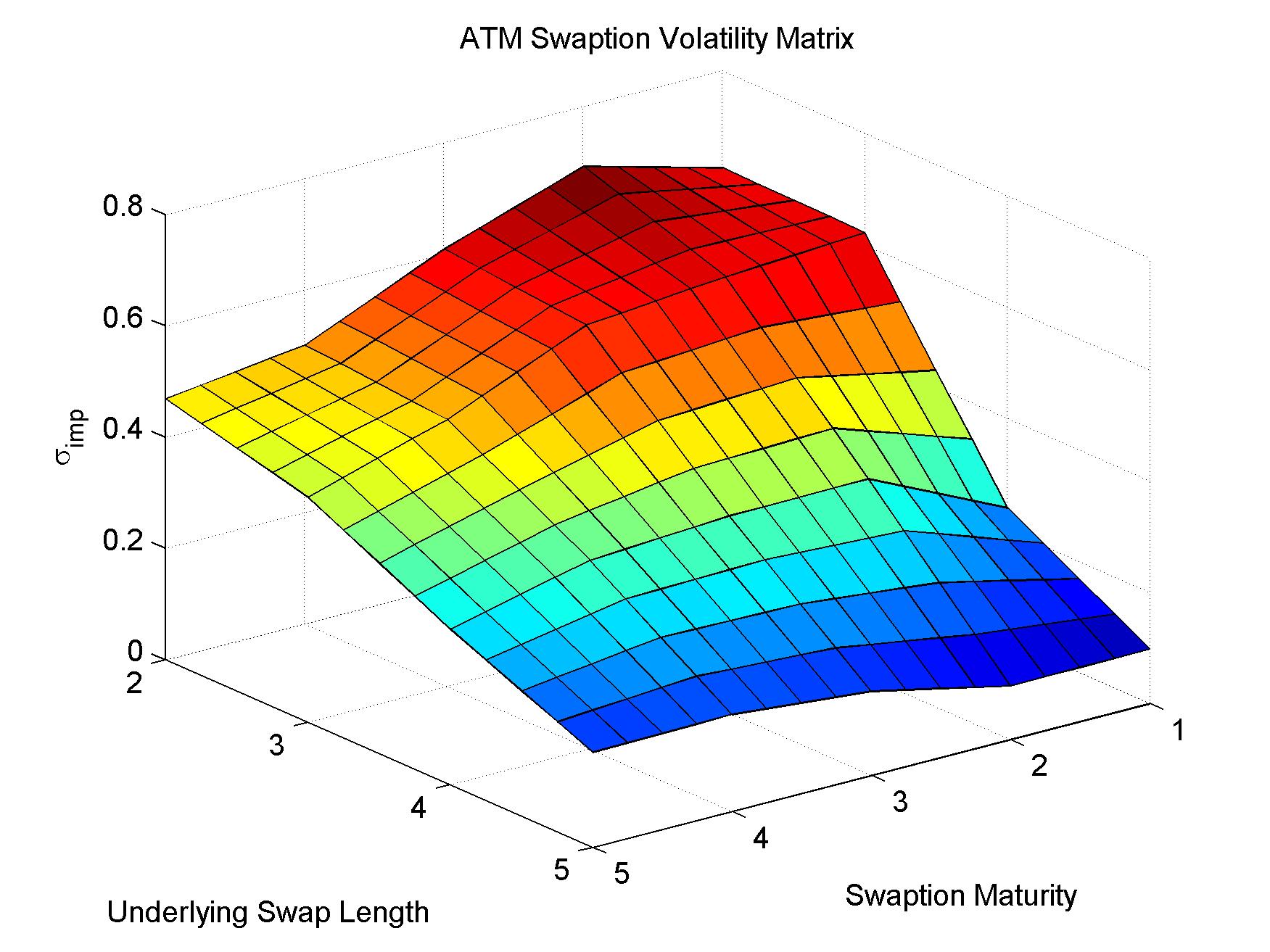}
  \caption{ATM Swaption Implied Volatility Surface generated by the Wishart Libor model}
  \label{fig:8}
\end{figure}

\begin{figure}[H]
\centering
  \includegraphics[scale=0.18]{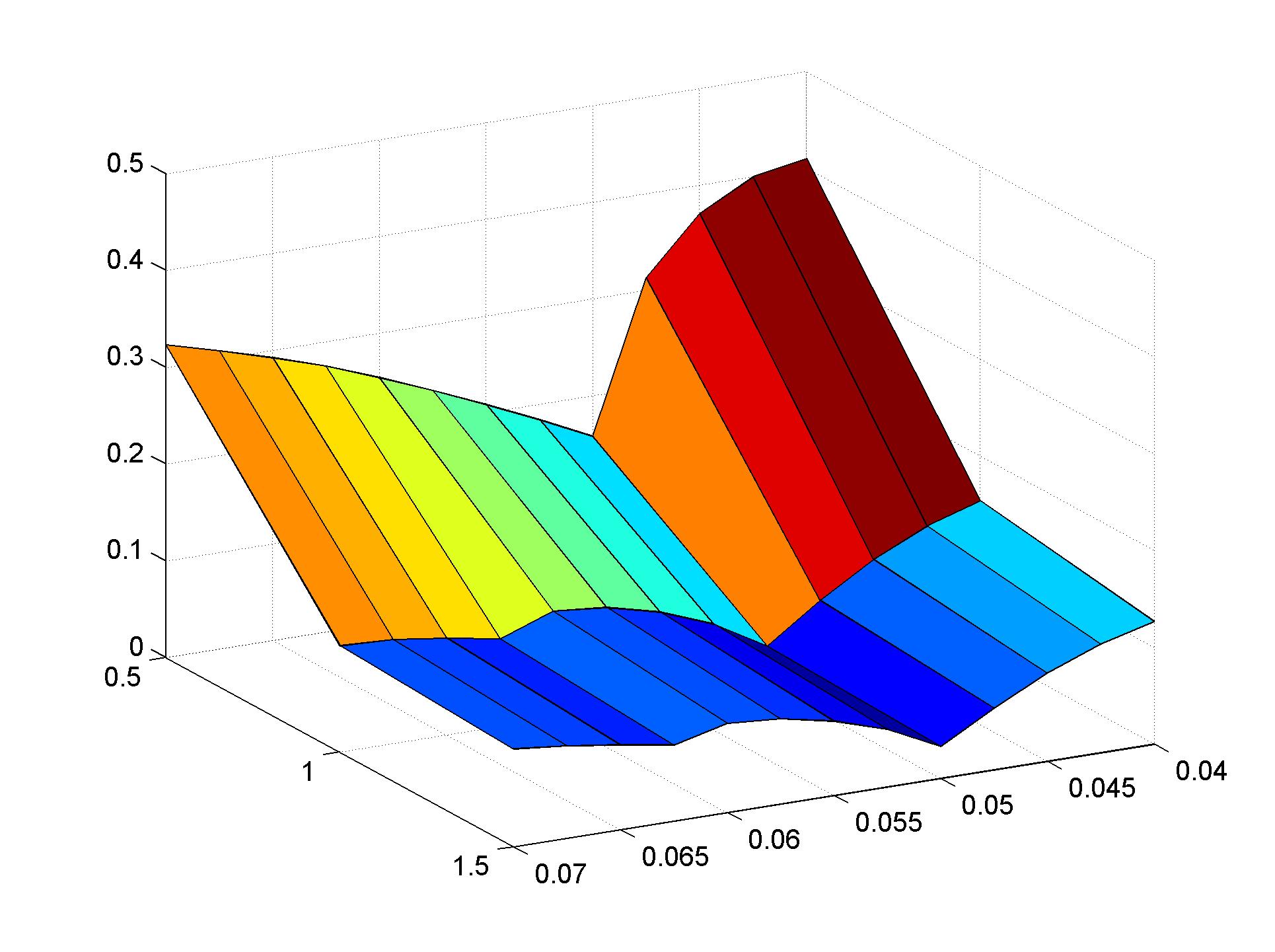}
  \caption{Caplet Implied Volatility Surface generated by the compound Poisson Libor model with Wishart distributed jumps}
  \label{fig:10}
\end{figure}

\clearpage

\bibliographystyle{abbrvnat}
\bibliography{biblio}
	
\addcontentsline{toc}{section}{Bibliography}

\end{document}